\newtheorem{theorem}{Theorem}[section]
\newtheorem{lemma}[theorem]{Lemma}
\renewcommand{\v}[1]{\mathbf{#1}}
\newcommand{\aoned}{\texttt{1-diag}}
\newcommand{\anormd}{\texttt{$\ell_2$-norm-diag}}
\newcommand{\norm}[1]{\| #1 \|}
\newtheorem{conj}[theorem]{Conjecture}
\newtheorem{example}{Example}[section]
\newtheorem{defn}{Definition}[section]
\newtheorem{as}{Assumption}[section]
\newtheorem{claim}{Claim}
\newenvironment{subproof}[1][\proofname]{
  
  \renewcommand*{\proofname}{Proof of claim}
  \begin{proof}[#1]
}{
  \end{proof}
}
\title{Robust Correlation Clustering with Asymmetric Noise}
\author{%
  Jimit Majmudar  \\
  University of Waterloo\\
  \texttt{jmajmuda@uwaterloo.ca} \\
  \And
  Stephen Vavasis \\
  University of Waterloo \\
  \texttt{vavasis@uwaterloo.ca} \\
  % \And
  % Coauthor \\
  % Affiliation \\
  % Address \\
  % \texttt{email} \\
  % \And
  % Coauthor \\
  % Affiliation \\
  % Address \\
  % \texttt{email} \\
}
\begin{document}

\maketitle

\begin{abstract}
Graph clustering problems typically aim to partition the graph nodes such that two nodes belong to the same partition set if and only if they are similar. Correlation Clustering is a graph clustering formulation which: (1) takes as input a signed graph with edge weights representing a similarity/dissimilarity measure between the nodes, and (2) requires no prior estimate of the number of clusters in the input graph. However, the combinatorial optimization problem underlying Correlation Clustering is NP-hard. In this work, we propose a novel graph generative model, called the Node Factors Model (NFM), which is based on generating feature vectors/embeddings for the graph nodes. The graphs generated by the NFM contain asymmetric noise in the sense that there may exist pairs of nodes in the same cluster which are negatively correlated. We propose a novel Correlation Clustering algorithm, called \anormd, using techniques from semidefinite programming. Using a combination of theoretical and computational results, we demonstrate that \anormd\ recovers nodes with sufficiently strong cluster membership in graph instances generated by the NFM, thereby making progress towards establishing the provable robustness of our proposed algorithm.
\end{abstract}

\section{Introduction}
Suppose we have $n$ objects and for any two objects $i, j$, a similarity score $p_{ij} \in [0, 1]$, and we wish to determine a clustering of the objects such that objects in the same cluster are similar and objects in different clusters are dissimilar. Correlation Clustering, first introduced by \cite{bansal2004correlation}, formulates this problem as an optimization problem which does not require a priori knowledge about the number of clusters in the graph. The idea in Correlation Clustering is to first form a weighted graph on $n$ nodes where the weight of edge $ij$ is obtained from the similarity score using the transformation $\log(p_{ij}/(1-p_{ij}))$, and then to find a partition of the nodes which maximizes agreements, i.e. the sum of positive weights whose endpoints are put in the same cluster and the absolute values of negative weights whose endpoints are put in different clusters, (or, equivalently, minimizes disagreements, i.e. the sum of positive weights whose endpoints are put in different clusters and the absolute values of negative weights whose endpoints are put in the same cluster). In general, the aforementioned optimization problem is $\NP$-hard. Interestingly, the objective functions for disagreement minimization and agreement maximization differ by a constant, and as a result, an approximation algorithm provides different approximation ratio guarantees for the two problems; however, for the work presented in this work, the two problems are equivalent. While there has been considerable interest in designing approximation algorithms (\cite{bansal2004correlation, demaine2006correlation, makarychev2015correlation, mathieu2010correlation,  swamy2004correlation}), there have been very few works focusing on average case analysis or recovery of a ground truth clustering, as discussed in the following literature review. This style of analysis has recently gained popularity in tackling hard machine learning problems such as low-rank matrix completion (\cite{candes2009exact, recht2011simpler}), dictionary learning (\cite{arora2014more, spielman2012exact}), and overlapping community detection (\cite{anandkumar2013tensor, majmudar2020provable, mao2017mixed}), to name a few. In this work, we introduce a new graph generative model based on generating feature vectors/embeddings for the nodes in the graph, and propose a tuning-parameter-free semidefinite-programming (SDP)-based algorithm to recover nodes with sufficiently strong cluster membership. 

Among existing provable methods, \cite{joachims2005error} propose a fully-random model which generates signed graphs, and show that the model ground truth clustering is close to the optimal solution of the combinatorial optimization problem of maximizing agreements. However, it is not shown how to provably recover the model ground truth efficiently.

\cite{mathieu2010correlation} propose fully- and semi-random models, i.e. in which there is a probabilistic component and a deterministic adversarial component, which generate signed graphs. Their fully-random model can be interpreted as a special case of the planted partition model in which the noise probabilities $1-p$ and $q$ are equal, and the lack of an edge is treated as an edge with weight $-1$. For graph instances generated by the fully-random model, they propose a recovery algorithm which uses a modification of the SDP formulation proposed by \cite{swamy2004correlation} followed by a novel randomized rounding procedure. However, their proposed SDP formulation suffers from the limitation that it has $\Theta(n^3)$ (where $n$ is the number of nodes in the graph) constraints corresponding to triangle inequalities, making it almost unusable in practice for graphs with as low as $5000$ nodes. 

\cite{chen2014clustering} consider the planted partition model with the added difficulty that some entries of the adjacency matrix are not known. The input graphs are signed as the lack of an edge is treated as an edge with weight $-1$, and their algorithm uses a matrix-splitting SDP, originally introduced to express a given matrix as the sum of a sparse and a low-rank matrix. They provide conditions under which the SDP solution is integral and therefore their algorithm requires no rounding. They also argue that the recovered model ground truth clustering coincides with the optimal solution of the combinatorial optimization problem of minimizing disagreements.

\cite{makarychev2015correlation} propose a semi-random model which generates signed graphs. Their algorithm also uses the SDP proposed by \cite{swamy2004correlation} followed by a novel rounding procedure. Their recovery result states that if the input graph is generated from their semi-random model and additionally also satisfies some deterministic structural properties, then with constant probability at most a fraction of the nodes are mis-clustered.

\textbf{Notation}: For any natural number $n \in \mathbb{N}$, $[n]$ denotes the set $\{1, 2, \dots, n\}$, $\mathbb{R}^n$ denotes the vector space of $n$-dimensional real-valued vectors, and $\mathbb{S}^n$ denotes the vector space of $n \times n$ symmetric, real-valued matrices. Let $M$ be any matrix. We use $M_{ij}$ or $M(i, j)$ to denote its entry $ij$; for any set $\mathcal{R} \subseteq \mathbb{N}$, $M(\mathcal{R}, :)$ (resp. $M(:, \mathcal{R})$) denotes the submatrix of $M$ containing all columns (resp. rows) but only the rows (resp. columns) indexed by $\mathcal{R}$. For any two sets $\mathcal{R}, \mathcal{S} \subseteq \mathbb{N}$, $M(\mathcal{R}, \mathcal{S})$ denotes the submatrix of $M$ containing the rows indexed by $\mathcal{R}$ and the columns indexed by $\mathcal{S}$. We use $\max(M)$ to denote its largest value, and $M_+$ and $M_-$ to denote the projections onto the cone of non-negative and non-positive matrices respectively. To show that $M$ is a symmetric positive semidefinite (resp. positive definite) matrix, we use the notation $M \succeq 0$ (resp. $M \succ 0$). If $M$ is an $n \times n$ square matrix, $diag(M)$ denotes a vector in $\mathbb{R}^n$ whose entry $i$ is $M_{ii}$ for each $i \in [n]$, and $Diag(M)$ denotes an $n \times n$ matrix whose diagonal is equal $diag(M)$ and whose each off-diagonal entry is zero. Let $\v{v}$ be any vector. We use $v_i$ or $v(i)$ to denote its entry $i$; for any set $\mathcal{R} \subseteq \mathbb{N}$, $\v{v}(\mathcal{R})$ denotes the subvector of $\v{v}$ containing entries indexed by $\mathcal{R}$. We use $\max(\v{v})$ to denote its largest value, and $\v{v}_+$ and $\v{v}_-$ to denote the projections onto the cone of non-negative and non-positive vectors respectively. If $\v{v} \in \mathbb{R}^n$, $Diag(\v{v})$ denotes an $n \times n$ matrix whose diagonal is equal to $\v{v}$ and whose each off-diagonal entry is zero. If $\v{v}$ is an entry-wise non-negative vector, then for any $p \in \mathbb{R}$, we denote by $\v{v}^{\circ p}$ the vector obtained by exponentiating each entry of $\v{v}$ with $p$.

For any two matrices $X,Y$ of identical dimensions, $\langle X,Y \rangle$ denotes the trace inner product $trace(X^TY)$. For any two vectors $\v{u}, \v{v} \in \mathbb{R}^n$ such that for each $i \in [n]$, $u_i \leq v_i$, then we use $[\v{u}, \v{v}]$ to denote the set $\{\v{x} \in \mathbb{R}^n: u_i \leq x_i \leq v_i \ \forall i \in [n]\}$.

We use $\norm{\cdot}$ to denote the $\ell_2$-norm for vectors. The Frobenius norm of a matrix is denoted by $\norm{\cdot}_F$. $I$ and $E$ denote the identity matrix and the matrix with each entry set to one respectively whose dimensions will be clear from the context. For any positive integer $i$, $\v{e}_i$ denotes column $i$ of the identity matrix and $\v{e}$ denotes the vector with each entry set to one; the dimension of these vectors will be clear from context.

For any graph $G = (V, W)$, i.e. graph with node set $V$ and weighted adjacency matrix $W$, $L(G)$ denotes the graph Laplacian matrix defined to be $Diag(W\v{e}) - W$. For any subset $V' \subseteq V$ of nodes, $G[V']$ denotes the subgraph of $G$ induced by nodes in $V'$.

\section{Problem Formulation}

\subsection{Node Features Model (NFM)}\label{nfm}

We begin by defining the generative model, called the \emph{Node Features Model (NFM)}, for which we formulate the Correlation Clustering recovery problem. 

\begin{defn}[Node Features Model (NFM)]\label{nfm-def}
Let $n$ and $k$ be positive integers denoting the number of nodes and the number of clusters respectively. Let the nodes and the clusters be labelled using the sets $[n]$ and $[k]$ respectively. For each node $i \in [n]$, draw independently a feature vector $\bm{\theta}^i \in \mathbb{R}^k$ from a probability distribution on the unit simplex. Generate a weighted random graph $G$ on the $n$ nodes with weight matrix $W$ defined as
\begin{equation*}
    W_{ii'} = \begin{cases} \log\left( \dfrac{{\bm{\theta}^i}^T \bm{\theta}^{i'}}{1-{\bm{\theta}^i}^T \bm{\theta}^{i'}}\right) & \text{if $i\neq i'$} \\
    0 & \text{otherwise.}
    \end{cases}
\end{equation*}
For each $j \in [k]$, define cluster $V_j$ as
\begin{equation*}
    V_j := \{i \in [n]:\max(\bm{\theta}^i)>0.5, \arg \max(\bm{\theta}^i) = j \}
\end{equation*}
and define the set of stray nodes $V_{stray}$ as
\begin{equation*}
    V_{stray} := \{i \in [n]: \max(\bm{\theta}^i) \leq 0.5 \}.
\end{equation*}
\end{defn}

The intuition behind NFM is that first we generate a feature vector (or embedding) for each node in the graph, then for any pair $i, i' \in [n]$ of distinct nodes, we interpret ${\bm{\theta}^i}^T \bm{\theta}^{i'}$ as a similarity score, i.e. the probability with which the two nodes belong to the same cluster, lastly we apply a logarithmic transformation on the similarity score which produces a positive weight if the score is greater than $0.5$ and a negative weight if the score is less than $0.5$. The transformation $h(x) = \log(x/(1-x))$ which maps the set $(0, 1)$ to arbitrary real values is called the \emph{logit} or \emph{log-odds function} in literature, and its inverse $h^{-1}(x) = 1/(1+e^{-x})$ is the so-called \emph{logistic function}. These functions are commonly used in regression problems in which the output variable is interpreted as a probability and is therefore expected to belong to the set $(0, 1)$. For instance, a multivariate, vector-valued generalization of the logistic function, called the \emph{softmax function}, is widely used in classification problems to transform arbitrary real-valued vectors into probabilities corresponding to class memberships.  

We begin by asking the following question for the NFM described by (\ref{nfm-def}):
\begin{center}
    \textit{Given $W$, how can we efficiently recover the sets $V_1, \dots, V_k$ using no prior knowledge of $k$?}
\end{center}

Using a combination of theoretical analyses and computational experiments, we make progress towards answering the question posed above by proposing two SDP-based recovery algorithms, called \aoned\ and \anormd. The first recovery algorithm, \aoned, is studied in Section \ref{1d-sec} and is based on the SDP formulation of Swamy \cite{swamy2004correlation} whose variants have also been used in  \cite{makarychev2015correlation, mathieu2010correlation}. Then we demonstrate a limitation of the aforementioned algorithm to handle certain noisy instances. Consequently, we propose and analyze the novel \anormd\ recovery algorithm in Section \ref{nd-sec}. Our theoretical analysis is not comprehensive and the deficiencies are taken care of using evidence from computational experiments. Before proceeding to the material on the two recovery algorithms, in the subsequent sections, we discuss structural properties of the NFM relevant to the recovery problem we are interested in solving.  

\subsection{Nature of Noise in the NFM}\label{noise-nature}
We discuss the nature of noise in our model. Define the \emph{cluster set}
\begin{equation*}
    C_j := \{\v{x} \in \Delta^{k-1} : x_j > 0.5 \}
\end{equation*}
for each $j \in [k]$, and the \emph{central set}
\begin{equation*}
    C := \{\v{x} \in \Delta^{k-1} : \max(\v{x}) \leq 0.5 \}.
\end{equation*}
Figure \ref{sets} shows these sets for $k=3$.

\begin{figure}
    \centering
    \begin{tikzpicture}
\draw (0,0) node[anchor=north]{$(1, 0, 0)$}
  -- (4,0) node[anchor=north]{$(0, 1, 0)$}
  -- (2,3.46) node[anchor=south]{$(0, 0, 1)$}
  -- cycle;
\draw (2,0) node[anchor=north]{}
  -- (3,1.73) node[anchor=north]{}
  -- (1,1.73) node[anchor=south]{}
  -- cycle;
\draw (1, 0.7) node[] {$C_1$}
(3, 0.7) node[] {$C_2$}
(2, 2.4) node[] {$C_3$}
(2, 1.1) node[] {$C$};
\end{tikzpicture}
    \caption{Central set $C$ and cluster sets $C_1, C_2, C_3$ for the unit simplex in $\mathbb{R}^3$.}
    \label{sets}
\end{figure}
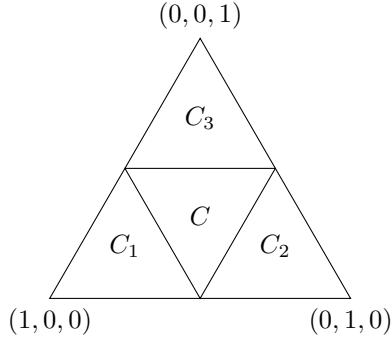

Note that in the light of the above definitions, we may equivalently redefine the sets $V_j$, for each $j \in [k]$, and $V_{stray}$ in Definition \ref{nfm-def} as
\begin{align*}
    V_j &:= \{i \in [n] : \bm{\theta}^i \in C_j \} \\
    V_{stray} &:= \{i \in [n] : \bm{\theta}^i \in C \}.
\end{align*}

Observe that for any $\v{x} \in C$ and $\v{y} \in \Delta^{k-1}$, $\v{x}^T\v{y} \leq 0.5$. This suggests that in the weighted graphs generated by the NFM, the stray nodes form negative edges with all other nodes in the graph, hence justifying their name. Due to this property, such nodes are quite benign with regards to mathematical analysis as any reasonable clustering algorithm, including the ones proposed in this work, ought to be able to detect them exactly. For any $\v{x} \in C_j$, $\v{y} \in C_{j'}$, for some distinct $j, j' \in [k]$, we have that $\v{x}^T\v{y} < 0.5$. This suggests that in the weighted graphs generated by the NFM, the clusters are \emph{well-separated} in the sense that each pair of nodes lying in distinct clusters shares a negative weight edge. However, if both $\v{x}, \v{y} \in C_j$, for some $j \in [k]$, then $\v{x}^T\v{y}$ may or may not be larger than $0.5$ and this is what introduces noise in our model. In other words, in the graphs generated by the NFM, it is possible for two nodes lying in the same cluster to share a negative weight edge. Therefore NFM models only one-sided noise. This behavior is well-motivated as real-world graphs do not always have a symmetric two-sided noise. For instance, consider a social network of researchers from the academic communities of mathematics, physics, history, and biology. Suppose the edge weights represent pair-wise similarities between any two researchers determined using the number of co-authored research articles. In this setting, we might have occasional collaborations amongst researchers of different communities; however, we almost certainly cannot expect all researchers in the same community to have collaborated with each other. In the language of weighted graphs, if the different academic communities represent the clusters in the graph, then we should expect a significantly high number of within-cluster negative edges compared to between-cluster positive edges. Due to such practical motivation, Correlation Clustering with asymmetric noise has also been studied in \cite{jafarov2020correlation, jafarov2021local}.

\subsection{Feature Space for a Cluster in the NFM}\label{feature-space}
As briefly mentioned in Section \ref{noise-nature}, it is possible for two nodes belonging in the same cluster to share a negative edge. It is instructive to understand further the nature of such negative edges. For each $j \in [k]$, define a partition of the set $C_j$ into \emph{strong} and \emph{fringe sets} as 
\begin{align*}
    C_j^{strong} &:= \{\v{x} \in \Delta^{k-1} : x_j \geq 1/\sqrt{2} \} \\
    C_j^{fringe} &:= \{\v{x} \in \Delta^{k-1} : 0.5 \leq x_j < 1/\sqrt{2} \}.
\end{align*}

Figure \ref{strong-fringe} shows these sets for $k=3$.

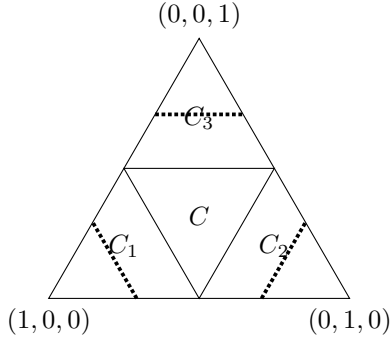
\begin{figure}
    \centering
    \begin{tikzpicture}
\draw (0,0) node[anchor=north]{$(1, 0, 0)$}
  -- (4,0) node[anchor=north]{$(0, 1, 0)$}
  -- (2,3.46) node[anchor=south]{$(0, 0, 1)$}
  -- cycle;
\draw (2,0) node[anchor=north]{}
  -- (3,1.73) node[anchor=north]{}
  -- (1,1.73) node[anchor=south]{}
  -- cycle;

\draw[densely dotted, line width = 0.5mm] (2.83,0) node[anchor=north]{}
  -- (3.415,1.012) node[anchor=north]{};
\draw[densely dotted, line width = 0.5mm] (1.17,0) node[anchor=north]{}
  -- (0.585,1.012) node[anchor=north]{};
\draw[densely dotted, line width = 0.5mm] (1.415,2.448) node[anchor=north]{}
  -- (2.585,2.448) node[anchor=north]{};

\iffalse
\draw[densely dotted, line width = 0.5mm] (2.4,0) node[anchor=north]{}
  -- (3.2,1.384) node[anchor=north]{};
\draw[densely dotted, line width = 0.5mm] (1.6,0) node[anchor=north]{}
  -- (0.8,1.384) node[anchor=north]{};
\draw[densely dotted, line width = 0.5mm] (1.2,2.076) node[anchor=north]{}
  -- (2.8,2.076) node[anchor=north]{};
\fi

\draw (1, 0.7) node[] {$C_1$}
(3, 0.7) node[] {$C_2$}
(2, 2.4) node[] {$C_3$}
(2, 1.1) node[] {$C$};
\end{tikzpicture}
    \caption{Central set $C$ and the partition of corner sets $C_1, C_2, C_3$ into strong and fringe sets, shown using dotted lines, for the unit simplex in $\mathbb{R}^3$; for each corner set, the partition set containing a simplex vertex denotes the strong set.}
    \label{strong-fringe}
\end{figure}

Consequently, for each $j \in [k]$, we partition the cluster nodes $V_j$ into \emph{strong} and \emph{fringe nodes} as
\begin{align*}
    V_j^{strong} &:= \{i \in [n]: \bm{\theta}^i \in C_j^{strong}\} \\
    V_j^{fringe} &:= \{i \in [n]: \bm{\theta}^i \in C_j^{fringe}\}.
\end{align*}

These definitions are motivated by the intuition that the magnitude of the largest entry in the feature vector of a node quantifies the strength of cluster membership for that node. Moreover, the cut-off of $1/\sqrt{2}$ is chosen by noticing that any two points in the strong set of the same cluster have an inner product of at least $0.5$. In other words, any two nodes which are strong for the same cluster share a non-negative edge. Therefore if the graph contains only strong nodes for each cluster, then it has no noise in the form of a negative within-cluster edge. Fringe nodes, however, may potentially share some negative edges among themselves and with other nodes in the same cluster because the memberships of such nodes in their respective clusters are not sufficiently strong. Therefore we may think that it is difficult to cluster all the fringe nodes correctly.

\subsection{Relation to the MMSB}
The problem setup developed using the NFM bears some resemblance with the weighted version of MMSB considered in \cite{majmudar2020provable}. In particular, the graphs obtained by the NFM can be obtained by setting the community interaction matrix $B$ to be the identity matrix in the weighted MMSB. In terms of recovery, because we are modeling Correlation Clustering using the NFM, our goal is to recover only the cluster labels without using an a priori estimate of the number of clusters $k$. The weighted MMSB models the overlapping community detection problem in which the goal was to recover the fractional memberships of each node in the different communities and we were allowed to use a parameter corresponding to $k$ in the recovery algorithm.

%----------------------------------------------------------------------
\section{\aoned\ Recovery Algorithm}\label{1d-sec}
%----------------------------------------------------------------------
We first present and analyze the \aoned\ recovery algorithm which uses the SDP relaxation (\ref{p-1d}) first introduced in \cite{swamy2004correlation} to perform Correlation Clustering. For any node set $V$, we define the \emph{cluster matrix} for some partition of $V$ to be a $\lvert V \rvert \times \lvert V \rvert$, $0/1$ matrix whose entry $ii'$ is $1$ if and only if nodes $i$ and $i'$ belong to the same partition set.

\begin{algorithm}
\caption{\aoned}\label{alg-1d}
\textbf{Input:} Graph $G = (V, W)$ generated according to NFM \\
\textbf{Output:} Symmetric matrix $X^c$ of the same dimension as $W$ whose each entry is in $\{0, 1\}$
\begin{algorithmic}[1]
\STATE $X^* = \arg \max\ \langle W, X\rangle \text{ s.t. } X\geq 0, X\succeq 0, X_{ii} = 1 \; \forall i \in [n]$
\STATE $X^c = $ Round($X^*, 0.5$)
\IF{$X^c$ is not the cluster matrix for some partition of $V$}
\STATE $X^c = 0$
\ENDIF
\end{algorithmic}
\end{algorithm}

\begin{algorithm}
\caption{Round}\label{round}
\textbf{Input:} Matrix $X$, scalar $t$  \\
\textbf{Output:} $0/1$ matrix $X^r$ of the same dimension as $X$
\begin{algorithmic}[1]
\FOR{$i, j \in [n]$}
\STATE $X^r_{ij} = \begin{cases}
1 & \text{if } X_{ij} > t \\
0 & \text{if } X_{ij} \leq t
\end{cases}$
\ENDFOR
\end{algorithmic}
\end{algorithm}

Note that the output of \aoned\ can possibly be the zero matrix and therefore does not define a clustering for the input graph. However, the theory developed in Sections \ref{warmup} and \ref{1d-theory} provides conditions on the input graph sufficient for the output of \aoned\ to induce a clustering.  

\begin{equation}
\label{p-1d}
\tag{P-1D}
\begin{aligned}
& \max_X && \langle W, X \rangle \\
& \: \mathrm{s.t.} && X \geq 0 \\
&&& X \succeq 0 \\
&&& X_{ii} = 1 \ \forall i \in [n].
\end{aligned}
\end{equation}

\subsection{Warmup}\label{warmup}
We begin by analyzing the scenarios in which the the SDP (\ref{p-1d}) has a $0/1$ solution. The following theorem provides a deterministic sufficient condition on the graph instances for (\ref{p-1d}) to have a $0/1$ solution. Subsequently, we discuss the deterministic sufficient condition in the context of the NFM. 

\begin{theorem}\label{int-opt}
Let $G = (V, W)$ be a graph generated using the NFM. Suppose that for each $j \in [k]$, $L(G[V_j]) \succeq 0$. Let $X^*$ denote the cluster matrix corresponding to the partition $\{V_1, \dots, V_{k}, \{v\}_{v \in V_{stray}}\}$. Then $X^*$ is an optimal solution of (\ref{p-1d}).
\end{theorem}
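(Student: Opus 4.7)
The plan is to certify optimality of $X^*$ via SDP duality by constructing a dual feasible solution whose objective matches $\langle W, X^* \rangle$. Introducing multipliers $\Lambda \geq 0$ (entrywise) for $X \geq 0$, $S \succeq 0$ for $X \succeq 0$, and $y \in \mathbb{R}^n$ for the diagonal equality constraints, the Lagrangian dual of (\ref{p-1d}) is
\begin{equation*}
\min_{y, S, \Lambda} \sum_{i=1}^n y_i \quad \text{s.t.} \quad Diag(y) - W - S = \Lambda, \ \Lambda \geq 0, \ S \succeq 0.
\end{equation*}
Complementary slackness with $X^*$ forces $\Lambda_{ii'} = 0$ whenever $i, i'$ lie in a common block of the partition $\{V_1, \dots, V_k\} \cup \{\{v\}: v \in V_{stray}\}$, and it forces $SX^* = 0$, i.e., $S$ must annihilate each cluster indicator vector and each standard-basis vector $\v{e}_i$ corresponding to a stray node.

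Guided by these requirements, I would propose the block-diagonal matrix $S$ defined by $S(V_j, V_j) := L(G[V_j])$ for each $j \in [k]$ and $S_{ii'} := 0$ whenever $i$ or $i'$ is stray, together with $y_i := S_{ii}$ for all $i \in [n]$, so that $\Lambda := Diag(y) - W - S$ is determined. The hypothesis $L(G[V_j]) \succeq 0$ makes $S \succeq 0$, and $SX^* = 0$ follows from the Laplacian identity $L(G[V_j])\v{e} = 0$ on cluster columns combined with the zero rows/columns of $S$ at stray indices. Using $W_{ii} = 0$ and $L(G[V_j])(i,i') = -W_{ii'}$ for distinct $i, i' \in V_j$, a short computation shows that $\Lambda$ vanishes on the diagonal and on all within-block off-diagonal entries, so $\langle \Lambda, X^* \rangle = 0$ holds automatically.

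What remains is the entrywise nonnegativity of $\Lambda$, which reduces to showing $\Lambda_{ii'} = -W_{ii'} \geq 0$ for every pair $(i, i')$ straddling two distinct blocks of the partition. This is exactly the one-sided-noise property recorded in Section \ref{noise-nature}: if $\bm{\theta}^i \in C_j$ and $\bm{\theta}^{i'} \in C_{j'}$ with $j \neq j'$, or if either feature vector lies in the central set $C$, then $(\bm{\theta}^i)^T \bm{\theta}^{i'} \leq 0.5$, and the log-odds transformation gives $W_{ii'} \leq 0$. A final bookkeeping step verifies that both objectives evaluate to $\sum_j \v{e}^T W(V_j, V_j) \v{e}$ (on the dual side via $trace(L(G[V_j])) = \v{e}^T W(V_j, V_j) \v{e}$), so by weak duality $X^*$ is optimal.

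I do not anticipate a deep obstacle past this dual construction; the genuinely non-trivial positive semidefiniteness requirement on the cluster blocks is exactly what the hypothesis delivers, and the inter-block sign condition is exactly what the NFM structure delivers. The main source of friction is uniformly handling the stray nodes as degenerate singleton blocks, so that the PSD, complementary slackness, and nonnegativity conditions are all verified cleanly on the stray rows and columns where the dual block collapses to zero.
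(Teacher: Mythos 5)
Your proposal is correct and essentially identical to the paper's proof: you construct the same dual certificate (your $S$ and $\Lambda$ are the paper's $Z^*$ and $Y^*$, with $y$ given by the within-cluster row sums of $W$), nonnegativity of the cross-block multipliers comes from the same NFM sign structure, and positive semidefiniteness of $S$ comes from the Laplacian hypothesis exactly as in the paper. The only cosmetic difference is that you close via weak duality with matching objective values, whereas the paper verifies strict feasibility of both problems and invokes the KKT/complementary-slackness conditions; both routes are valid with this certificate.
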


Recall that $G[V_j]$ denotes the subgraph of $G$ induced by the node set $V_j$ and $L(G[V_j])$ denotes the Laplacian matrix of graph $G[V_j]$. The above theorem states that exact recovery of true clusters is achievable using (\ref{p-1d}) provided each cluster Laplacian is positive semidefinite. To connect this result with the NFM, we may quantify the probability such that each cluster Laplacian, in a graph instances generated by the NFM, is positive semidefinite. Table \ref{clust-lap-psd} shows some computational experiments in this regard. Each row in the table corresponds to $10$ cluster instances generated using the NFM in which the simplex distribution is chosen to be the Dirichlet distribution. We fix $k=3$ and the Dirichlet parameter $\bm{\alpha} = 0.3\v{e}$. The first column denotes the range in which the cluster size belongs, and the second column counts the \emph{PSD success}, i.e. number of cluster instances, out of $10$, which have a positive semidefinite Laplacian. Moreover, the third column contains the mean smallest eigenvalue of the Laplacian. 

\begin{table}
  \caption{Verification of positive semidefiniteness of cluster Laplacians.}
  \label{clust-lap-psd} 
  \centering
  \begin{tabular}{ccc}
    \toprule
    Cluster size range & PSD success $(/10)$ & Mean smallest Laplacian eigenvalue  \\
    \midrule
    $6-10$ & $4$ & $-1.31$ \\
    $11-15$ & $4$ & $-1.31$ \\
    $16-20$ & $0$ & $-3.60$ \\
    $21-25$ & $1$ & $-2.82$ \\
    $26-30$ & $1$ & $-4.26$ \\
    $31-35$ & $0$ & $-5.96$ \\
    $36-40$ & $0$ & $-6.35$ \\
    \bottomrule
  \end{tabular}
\end{table}

These computational results suggests a weakness of Theorem \ref{int-opt} in the sense that the determinstic condition required for exact recovery seems to hold with a probability converging to $0$ for the NFM with the Dirichlet distribution as the size of input graph grows. Morever, the decreasing smallest eigenvalue of the cluster Laplacians may also be interpreted as an increasing amount of noise in the clusters which motivates the following conjecture.

\begin{conj}\label{it-impos}
Let $G$ be a graph generated according to NFM in which the simplex distribution is chosen to be the Dirichlet distribution with constant parameter. No algorithm can exactly recover the true clusters in $G$ with probability not converging to $0$ as $n \to \infty$. 
\end{conj}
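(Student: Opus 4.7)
The plan is to argue non-identifiability at the cluster boundary. Under a Dirichlet distribution with constant parameter $\alpha$, the density on the simplex is either bounded below on the interior (for $\alpha>1$) or has integrable singularities at the corners (for $\alpha\le 1$), and in either case the probability mass in an $\epsilon$-band around a cluster boundary hyperplane $\{\v{x}:x_j=0.5\}$ is $\Omega(\epsilon)$, independent of $n$. I would begin by combining this lower bound with the independence of the $n$ feature-vector draws to show that, with probability approaching $1$, there exists at least one node $i$ whose feature vector $\bm{\theta}^i$ lies within distance $\epsilon_n=\omega(1/n)$ of a boundary between two cluster sets $C_j$ and $C_{j'}$, or between a cluster set $C_j$ and the central set $C$.

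For such a borderline node I would set up a two-point hypothesis test between two alternative realizations of the model that differ only in whether $i$ is assigned to $V_j$, to $V_{j'}$, or to $V_{stray}$. Concretely, one couples two instances $\Theta$ and $\Theta'$ that agree on all rows except row $i$, and in which row $i$ has been reflected across the relevant boundary. Because $\bm{\theta}^i$ is within $\epsilon_n$ of the boundary, the inner products $(\bm{\theta}^i)^T\bm{\theta}^{i'}$ change by $O(\epsilon_n)$ for every neighbour $i'$, so the edge weights $W_{ii'}$ on the two sides of the coupling agree to within $O(\epsilon_n)$ while all other edges coincide exactly. A Le Cam--type argument would then say that any estimator must err on one of the two realizations with probability bounded away from $0$, and combining over the $\Theta(1)$ (in expectation) borderline nodes via independence would deliver the claimed vanishing recovery probability.

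The hard part is the second step, because in the NFM as stated $W$ is a deterministic function of $\Theta$: the conditional law of $W$ given a fixed feature configuration is a point mass, so a naive information-theoretic lower bound has no randomness to exploit beyond that of $\Theta$ itself. To make the Le Cam reduction bite one must show that the map $\Theta\mapsto W$ is not effectively invertible near the boundary, i.e., that the Bayes-optimal decoder's posterior over the cluster label of node $i$, conditioned on $W$ and the event that $\bm{\theta}^i$ lies in the $\epsilon_n$-band, splits nontrivial mass across both sides. Equivalently, this requires a low-rank matrix completion / identifiability statement for the partially observed Gram matrix $\{(\bm{\theta}^i)^T\bm{\theta}^{i'}\}_{i\ne i'}$ showing that even an all-powerful oracle cannot resolve the sign of $\bm{\theta}^i_j-\bm{\theta}^i_{j'}$ when it is $o(1)$. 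Establishing this identifiability obstruction---and in particular showing it is information-theoretic rather than merely computational---is, in my view, the main obstacle between the outline above and a theorem; all other ingredients (Dirichlet mass estimates, coupling bounds, Le Cam) are standard once that step is in place.
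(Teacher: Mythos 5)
First, note that the paper does not prove this statement: it is stated explicitly as a conjecture and left open, supported only by the computational evidence in Table 1 (the decreasing smallest Laplacian eigenvalue) and by citations to the literature on information-theoretic limits. There is therefore no proof in the paper to compare yours against; the only question is whether your outline could close the conjecture, and I do not believe it can in its present form.

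You have correctly located the obstacle, but I would go further and say it is fatal to this route rather than merely ``the hard part.'' In the NFM the observation $W$ is a deterministic, noiseless function of $\Theta$: after inverting the logit, the decoder sees every off-diagonal inner product $(\bm{\theta}^i)^T\bm{\theta}^{i'}$ exactly. A Le Cam two-point bound needs the two hypotheses to induce observation distributions whose total variation distance is strictly less than one; your coupled instances $\Theta$ and $\Theta'$ produce two \emph{distinct deterministic} matrices (reflecting row $i$ changes $n-1$ observed entries, each by a nonzero amount even if only $O(\epsilon_n)$), so the total variation distance is exactly one and the two-point bound is vacuous --- an all-powerful decoder distinguishes the two realizations with probability one. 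Worse, the identifiability obstruction you would need (that an oracle cannot resolve the sign of $\theta^i_j-\theta^i_{j'}$ when it is $o(1)$) is most likely false: the off-diagonal entries of the rank-$k$ matrix $\Theta\Theta^T$ generically determine its missing diagonal once $n$ exceeds a constant multiple of $k$, hence determine $\Theta$ up to a right orthogonal factor, and the simplex constraints (rows nonnegative and summing to one, with full-support Dirichlet rows) generically cut that factor down to a coordinate permutation. If that is so, an inefficient algorithm recovers the true clusters exactly with probability one for every $n$, and the conjecture as literally stated (``no algorithm'') would be false; its real content would have to involve either a restriction to polynomial-time algorithms or a noisy observation channel. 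Your Dirichlet boundary-mass estimate and the coupling construction are sound as far as they go, but they cannot yield an information-theoretic lower bound when the observation given $\Theta$ carries no randomness.
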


Conjecture \ref{it-impos} highlights information-theoretic limitations for exactly recovering the ground truth clusters, see \cite{banks2016information, banks2018information, berthet2020statistical, berthet2019exact, wang2016average} for instance, for results on information-theoretic limits for similar or related problems. The above observations also lead us to reformulate the central question posed in Section \ref{nfm} as follows. 

\begin{center}
    \textit{Given $W$, how can we efficiently recover exactly $k$ disjoint node sets, such that each node set contains exactly one of $V_1^{strong}, \dots, V_k^{strong}$, using no prior knowledge of $k$?}
\end{center}

We may interpret this reformulation as: instead of attempting to exactly recover the true clusters, we focus on exactly recovering the strong nodes, possibly in the presence of fringe nodes, which introduce noise in the form of negative within-cluster edges. The usage of the word ``contains'' in the above question indicates that recovery of any fringe node for a cluster is not necessarily intended but may happen. This perspective on robust Correlation Clustering, which involves clustering essentially only a subgraph of the input graph, is similar to that in \cite{krishnaswamy2019robust}, which provides an approximation algorithm for a generalized Correlation Clustering problem wherein the input graph is corrupted with a given number of noisy nodes which must be discarded before performing clustering. 

To answer the reformulated question above, we adopt a two-step algorithm analysis approach described as follows. Let $G$ be a graph generated by the NFM and let $\mathcal{A}$ be a cluster recovery algorithm of interest. For each $j \in [k]$, let $V_j'$ be the union of strong nodes and possibly some fringe nodes for cluster $j$, such that we expect $\mathcal{A}$ to successfully recover node sets $V_1', \dots, V_k'$ with some non-trivial probability as the number of nodes $n \to \infty$. In other words, $\mathcal{A}$ is likely to fail on the sets $V_j \setminus V_j'$ for each $j \in [k]$. We may formalize the behavior of $\mathcal{A}$ using the following two steps.

\begin{enumerate}
    \item For each $j \in [k]$, perturb the features of the node set $V_j \setminus V_j'$ to the central set to obtain the stray node set $V_j^{stray}$, and call the resulting graph $G'$. Prescribe deterministic conditions $\mathcal{C}'$ on node sets $V_j'$, for each $j \in [k]$, which ensure their exact recoverability from $G'$ by $\mathcal{A}$.
    \item For each $j \in [k]$, re-perturb the features of the node set $V_j^{stray}$ so as to obtain the node set $V_j \setminus V_j'$, which we may interpret as noisy nodes, i.e. we re-obtain graph $G$ from $G'$. Prescribe deterministic conditions $\mathcal{C}$ under which $\mathcal{A}$ is robust to the presence of node sets $V_j \setminus V_j'$, for each $j \in [k]$. The desired robustness properties are established by applying perturbation arguments to the analysis of $\mathcal{A}$ on $G'$ achieved in the previous step.
\end{enumerate}
In terms of probability quantification, we must also argue that for a graph $G$ generated by the NFM, the deterministic conditions required for provably robust recovery hold with probability not converging to $0$ as $n \to \infty$.

\subsection{Theoretical Guarantees}\label{1d-theory}
Using Theorem \ref{int-opt}, we conclude that if each cluster Laplacian is positive semidefinite, then \aoned\ achieves exact recovery. Adopting the two-step approach outlined in the previous section, we are now interested in the following two questions:
\begin{itemize}
    \item What is the probability that, for each cluster, the subgraph induced by the union of strong nodes and possibly some fringe nodes has a positive semidefinite Laplacian?
    \item Is the \aoned\ recovery algorithm robust to the presence of noisy nodes, i.e. fringe nodes that are close to being stray nodes?
\end{itemize}

In this section, we address the first question above, and in Section \ref{1d-robustness}, we address the second question. Observe that if we restrict our attention to the cluster subgraph induced by merely the strong nodes, then with probability $1$, the Laplacian is positive semidefinite because each edge has a non-negative weight. However, we are interested in extending this observation to a cluster subgraph induced by strong nodes and some fringe nodes which also possibly contains negative edges. (Based on the results in Table \ref{clust-lap-psd}, we cannot expect to include all fringe nodes.) For the NFM, directly quantifying the probability of Laplacian positive semidefiniteness for a cluster subgraph comprised of strong nodes and some fringe nodes appears a difficult task. Therefore in the following, Theorems \ref{psd1} and \ref{psd2} provide combinatorial sufficient conditions for a graph Laplacian to be positive semidefinite.

\begin{theorem}\label{psd1}
Let $G=(V, W)$ be a signed graph. Suppose for each negative edge $ii'$ where $i, i' \in [n]$, there exists a set of $m$ disjoint two-edge $ii'$-paths $\{P^{ii'}_l\}_{l \in [m]}$ of positive weights such that
\begin{equation*}
    -W_{ii'} \leq \sum\limits_{l \in [m]}\dfrac{1}{2} \times \text{harmonic mean of the two weights on $P^{ii'}_l$}
\end{equation*}
and the two-edge paths are disjoint across all negative edges, then $L(G) \succeq 0$. 
\end{theorem}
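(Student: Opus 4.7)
The strategy is to decompose the Laplacian edge-by-edge and then compensate each negative edge locally using the positive edges that lie on its assigned two-edge paths. Recall that for any edge $ij$ with weight $w$, the elementary Laplacian $w(\mathbf{e}_i - \mathbf{e}_j)(\mathbf{e}_i - \mathbf{e}_j)^T$ contributes additively to $L(G)$, so
\[
L(G) = \sum_{i<i'} W_{ii'}(\mathbf{e}_i - \mathbf{e}_{i'})(\mathbf{e}_i - \mathbf{e}_{i'})^T.
\]
Positive-weight edges contribute PSD summands, so the only obstruction to $L(G)\succeq 0$ comes from negative edges. The plan is to pair each negative edge with the bundle of positive edges on its designated two-edge paths, and to prove that each such bundle, taken together with the negative edge it covers, is PSD.

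The core step is the following one-path lemma: if $i,k,i'$ are distinct nodes and $a,b>0$ are weights on $ik$ and $ki'$, with $\mathbf{u}=\mathbf{e}_i-\mathbf{e}_k$ and $\mathbf{v}=\mathbf{e}_k-\mathbf{e}_{i'}$, then
\[
a\,\mathbf{u}\mathbf{u}^T + b\,\mathbf{v}\mathbf{v}^T \;\succeq\; \frac{ab}{a+b}(\mathbf{u}+\mathbf{v})(\mathbf{u}+\mathbf{v})^T = \frac{ab}{a+b}(\mathbf{e}_i-\mathbf{e}_{i'})(\mathbf{e}_i-\mathbf{e}_{i'})^T.
\]
I would prove this by testing against an arbitrary vector $\mathbf{x}$: setting $p=\mathbf{u}^T\mathbf{x}$ and $q=\mathbf{v}^T\mathbf{x}$, the inequality reduces to $ap^2+bq^2\ge \tfrac{ab}{a+b}(p+q)^2$, which is exactly Cauchy--Schwarz in the form $(p+q)^2\le (1/a+1/b)(ap^2+bq^2)$. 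The coefficient $\tfrac{ab}{a+b}$ is precisely $\tfrac{1}{2}$ times the harmonic mean of $a$ and $b$, matching the hypothesis.

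With this in hand, I would partition the positive edges into (i) those lying on some two-edge path $P^{ii'}_\ell$ and (ii) those not used by any path; the disjointness hypothesis across negative edges makes this partition well defined. The edges in group (ii) contribute a PSD block directly. For each negative edge $ii'$, group its $m$ assigned paths and the edge $ii'$ itself; the combined contribution is
\[
\sum_{\ell=1}^{m}\Big(a_\ell\,\mathbf{u}_\ell\mathbf{u}_\ell^T + b_\ell\,\mathbf{v}_\ell\mathbf{v}_\ell^T\Big) \;+\; W_{ii'}(\mathbf{e}_i-\mathbf{e}_{i'})(\mathbf{e}_i-\mathbf{e}_{i'})^T.
\]
Using $-W_{ii'}\le \sum_\ell \tfrac{a_\ell b_\ell}{a_\ell+b_\ell}$ to weaken the negative coefficient and then applying the one-path lemma summand-by-summand, each bracketed difference is PSD, so the entire bundle is PSD. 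Summing over all negative edges together with the unused positive edges exhibits $L(G)$ as a sum of PSD matrices, giving $L(G)\succeq 0$.

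The only delicate point I expect is the bookkeeping behind the disjointness hypothesis, namely ensuring that the global decomposition of $L(G)$ assigns each edge to at most one term so that no positive weight is double-counted when simultaneously compensating two negative edges. Once the hypothesis is used to separate the edges into the bundles described, everything else is routine linear algebra plus the Cauchy--Schwarz inequality above.
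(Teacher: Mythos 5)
Your proposal is correct, and the global structure matches the paper's: both decompose $L(G)$ into one ``bundle'' per negative edge (the edge plus its $m$ assigned two-edge paths) together with the leftover positive edges, using the disjointness hypothesis to ensure no positive edge is counted twice, and then show each bundle Laplacian is PSD. Where you differ is in how a bundle is certified PSD. The paper writes out the $(m+2)\times(m+2)$ bundle Laplacian explicitly and takes a single Schur complement with respect to the $m$ intermediate nodes, reducing PSD-ness of the bundle to the $2\times 2$ condition $-W_{ii'} \leq \sum_{l} W_{ii_l}W_{i'i_l}/(W_{ii_l}+W_{i'i_l})$; this yields an if-and-only-if characterization of when that particular bundle is PSD, i.e.\ it shows the stated inequality is exactly the right threshold for this decomposition. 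You instead decompose the bundle further, edge by edge, and dominate each two-edge path from below by $\frac{a_\ell b_\ell}{a_\ell+b_\ell}(\mathbf{e}_i-\mathbf{e}_{i'})(\mathbf{e}_i-\mathbf{e}_{i'})^T$ via the Cauchy--Schwarz inequality $(p+q)^2 \le (1/a+1/b)(ap^2+bq^2)$ --- the standard series-resistance lemma from support-graph theory. Your route is more elementary and modular (no Schur complements, and the one-path lemma is reusable for longer paths), at the cost of only proving the sufficient direction; both arrive at the identical compensation constant $\frac{1}{2}\times$ harmonic mean. Your bookkeeping concern about disjointness is handled exactly as you describe and poses no gap.
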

The intuition behind the proof of Theorem \ref{psd1} is to express the graph Laplacian as the sum of multiple graph Laplacians (corresponding to subgraphs of $G$), and then argue for the positive semidefiniteness of each summand Laplacian. Considering subgraphs in this way makes it easier to analyze negative edges; in particular a negative edge $ii'$ is included in a subgraph which also contains an adequate number of positive $ii'$-paths so as to compensate the contribution of the edge $ii'$ to the Laplacian. This idea is inspired by the support-graph technique used to design preconditioners for conjugate gradient \cite{bern2006support}. 

Theorem \ref{psd1} provides a sufficient condition to ensure Laplacian positive semidefiniteness, however, it is seemingly weak as described by the following example.

\begin{example}\label{2ep-weakness}
Generate a graph on $n$ nodes using the NFM with the probability distribution over the unit simplex fixed as the Dirichlet distribution. Suppose cluster $j$ of the graph contains $f_j$ fringe nodes. Consider a case in which a constant fraction of all pairs of the $f_j$ fringe nodes share a negative edge each. Then to use the sufficient condition in Theorem \ref{psd1} to ensure positive semidefiniteness of the Laplacian of cluster $j$, we require $\Omega(f_j^2)$ strong nodes for that cluster. In other words, if the cluster contains $n_j$ nodes, then Theorem \ref{psd1} allows for only $\mathcal{O}(\sqrt{n_j})$ fringe nodes. However letting $p$ be the probability of a feature vector lying in the fringe set for cluster $j$, we note that $\mathbb{E}[f_j] = np$. Moreover, using Hoeffding's inequality, we have that $f_j \in [np/2, 3np/2]$ with probability at least $1-2\exp(-np^2/2)$. That is, $f_j = \Theta(n)$, and consequently $f_j = \Omega(n_j)$, with probability converging to $1$ as $n \to \infty$. This suggests a potential weakness of the sufficient condition presented in Theorem \ref{psd1} for establishing positive semidefiniteness of cluster Laplacians.
\end{example}

The above shortcoming is addressed in the following theorem which provides a different combinatorial condition to ensure Laplacian positive semidefiniteness. 

\begin{theorem}\label{psd2}
Let $G=(V, W)$ be a signed graph. Let $U\subseteq V$ contain all nodes of $G$ adjacent to a negative edge. That is, $U := \{v \in V: W_{vw} < 0 \text{ for some } w \in V \}$. If there exists $S \subseteq V \setminus U$ such that for each $u \in U$ and $s \in S$, we have
\begin{equation}\label{psd2-as}
    \lvert S\rvert W_{us} \geq -2 \left( \sum\limits_{\substack{u' \in U: \\ W_{uu'} < 0}} W_{u u'} \right)
\end{equation}
then $L(G) \succeq 0$.
\end{theorem}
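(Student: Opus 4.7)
The plan is to show $x^{T} L(G) x \ge 0$ for every $x \in \mathbb{R}^{|V|}$ by splitting the standard quadratic form into positive-edge and negative-edge parts and dominating the negative part by a subset of the positive edges, namely those between $U$ and $S$. First I would write
\[
x^{T} L(G) x \;=\; \sum_{\{i,j\}:\, W_{ij}>0} W_{ij}(x_i-x_j)^{2} \;-\; \sum_{\{i,j\}:\, W_{ij}<0} |W_{ij}|(x_i-x_j)^{2},
\]
and immediately drop from the positive sum every edge except those of the form $\{u,s\}$ with $u \in U$ and $s \in S$; this only decreases the expression, and by definition of $U$ these edges have $W_{us} \ge 0$, so no sign issues arise. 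Also, every negative edge necessarily has both endpoints in $U$, so the negative sum is exactly $\sum_{\{u,u'\}\subseteq U,\, W_{uu'}<0} |W_{uu'}|(x_u-x_{u'})^{2}$.

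The central step is to use the elementary bound $(a+b)^{2} \le 2a^{2}+2b^{2}$ with $a=x_u-x_s$ and $b=x_s-x_{u'}$ to route each negative edge through the supporters in $S$. This gives, for every $s \in S$,
\[
(x_u-x_{u'})^{2} \;\le\; 2(x_u-x_s)^{2} + 2(x_s-x_{u'})^{2}.
\]
Averaging over $s \in S$ and multiplying by $|W_{uu'}|$ and summing over unordered negative pairs $\{u,u'\}\subseteq U$ yields
\[
\sum_{\{u,u'\}:\, W_{uu'}<0} |W_{uu'}|(x_u-x_{u'})^{2} \;\le\; \frac{2}{|S|}\sum_{u\in U,\, s\in S}\Biggl(\,\sum_{u'\in U:\, W_{uu'}<0} |W_{uu'}|\Biggr)(x_u-x_s)^{2},
\]
where the regrouping uses that every negative edge $\{u,u'\}$ contributes a $(x_u-x_s)^2$ term whose coefficient, summed over all negative neighbours $u'$ of $u$, gives exactly the negative weighted degree $d_u^{-} := -\sum_{u':\,W_{uu'}<0} W_{uu'}$ of $u$.

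Now I would invoke the hypothesis (\ref{psd2-as}) in its equivalent form $W_{us} \ge 2 d_u^{-}/|S|$, valid for every $u \in U, s \in S$, to conclude that the right-hand side of the previous display is at most $\sum_{u\in U,\, s\in S} W_{us}(x_u-x_s)^{2}$. Combining with the first display finishes the argument:
\[
x^{T} L(G) x \;\ge\; \sum_{u\in U,\, s\in S} W_{us}(x_u-x_s)^{2} \;-\; \sum_{\{u,u'\}:\, W_{uu'}<0} |W_{uu'}|(x_u-x_{u'})^{2} \;\ge\; 0.
\]
I expect no real obstacle beyond careful bookkeeping: one must verify that (i) every negative edge lies inside $U \times U$ so that the decomposition accounts for it, (ii) the sum of $2|W_{uu'}|/|S|$ over negative neighbours $u'$ of $u$ collapses exactly to $2d_u^{-}/|S|$, and (iii) the inequality $|S| W_{us} \ge 2 d_u^{-}$ used to absorb the supporters is precisely the statement of (\ref{psd2-as}). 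All three are immediate from the definitions, so the proof is essentially a quadratic-form rearrangement.
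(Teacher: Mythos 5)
Your proof is correct, and the bookkeeping checks out: every negative edge does lie inside $U$, the edge $\{u,s\}$ with $s \in S \subseteq V\setminus U$ cannot be negative so dropping the remaining positive edges is legitimate, and after averaging $(x_u-x_{u'})^2 \le 2(x_u-x_s)^2+2(x_{u'}-x_s)^2$ over $s\in S$ the coefficient of $(x_u-x_s)^2$ collapses to $\tfrac{2}{|S|}d_u^-$, which is exactly what hypothesis (\ref{psd2-as}) lets you absorb into $W_{us}$. However, your route is genuinely different from the paper's. The paper does not argue on the quadratic form at all: it constructs an explicit matrix $C$ agreeing with $L$ on the off-diagonal of $U\times U$, with diagonal $\sum_{u'\ne u}|L_{uu'}|$ and with blocks $C(U,S)=-C(U,U)\v{e}\v{e}^T/|S|$ and $C(S,S)=(\v{e}^TC(U,U)\v{e}/|S|)I$, then proves $C\succeq 0$ by diagonal dominance plus a Schur-complement computation, and proves $L-C\succeq 0$ by checking it is the Laplacian of a nonnegatively weighted graph (here the same quantity $2d_u^-/|S|$ appears as the off-diagonal correction, via (\ref{psd2-as})). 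The two arguments implement the same routing idea --- compensating each node's negative degree through the supporters in $S$ --- but yours is shorter, entirely elementary, and amounts to a sum-of-squares certificate written directly on $x^TL(G)x$, whereas the paper's matrix-splitting formulation is in the spirit of support-graph preconditioning and hands you an explicit PSD summand $C$ that is reused conceptually elsewhere (e.g., in the proof of Theorem \ref{nd-rec}, positive semidefiniteness of $Z^*$ is reduced to this theorem after a diagonal rescaling). One cosmetic caveat shared by both proofs: the averaging (and the paper's division by $|S|$) presupposes $S\neq\emptyset$, which is the only case of interest since an empty $S$ makes (\ref{psd2-as}) vacuous.
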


We revisit Example \ref{2ep-weakness} in the light of Theorem \ref{psd2}. If we assume that all edges in cluster $j$ other than the ones among the $f_j$ fringe nodes have a non-negative weight, and that the positive and negative weight magnitudes are of the same order, then to ensure positive semidefiniteness of the Laplacian of cluster $j$ using the sufficient condition obtained in Theorem \ref{psd2}, it suffices to have $f_j = \Theta(n_j)$. However, this example should not be interpreted to imply that Theorem \ref{psd2} is a strengthening of Theorem \ref{psd1}. For example, if we have a cluster in which each node is adjacent to a negative edge, Theorem \ref{psd1} may still be used to ensure positive semidefiniteness of the cluster Laplacian, but Theorem \ref{psd2} does not apply due to the absence of a set $S$. But for the purpose of analyzing a generative model such as the NFM, Theorem \ref{psd2} appears to be a better tool because of its tolerance to a number of fringe nodes that is linear in the size of the cluster, and because of the presence of strong nodes in the NFM. This is further corroborated by computational results shown in Table \ref{neg-strong-cond}. Each row in the table corresponds to $10$ cluster instances generated using the NFM in which the simplex distribution is chosen to be the Dirichlet distribution. We fix $k=3$ and the Dirichlet parameter $\bm{\alpha} = 0.3\v{e}$. The first column denotes the range corresponding to the size of the subgraph induced by strong nodes and fringe nodes whose feature vectors have largest entry at least $0.6$; the cut-off of $0.6$ is based on manual parameter search for the given setting of $k$ and $\bm{\alpha}$. The second column counts the \emph{combinatorial condition success}, i.e. number of instances, out of $10$, for which the subgraph satisfies (\ref{psd2-as}). 

\begin{table}
  \caption{Verification of sufficient condition (\ref{psd2-as}) for Laplacian positive semidefiniteness.}
  \label{neg-strong-cond} 
  \centering
  \begin{tabular}{ccc}
    \toprule
    Cluster size range & Combinatorial condition success $(/10)$ \\
    \midrule
    $6-10$ & $9$ \\
    $11-15$ & $9$ \\
    $16-20$ & $7$ \\
    $21-25$ & $9$ \\
    $26-30$ & $9$ \\
    $31-35$ & $8$ \\
    $36-40$ & $9$ \\
    \bottomrule
  \end{tabular}
\end{table}

These computational results suggest that the probability with which the cluster subgraphs consisting of nodes whose feature vectors have largest entry at least $0.6$ have a positive semidefinite Laplacian does not apparently converge to $0$ as $n \to \infty$, and also motivate the following conjecture. 

\begin{conj}
Let $G=(V,W)$ be a graph generated using the NFM in which the simplex distribution is chosen to be the Dirichlet distribution with constant parameter $\bm{\alpha}$. Then there exists a scalar $t(k, \bm{\alpha}) \in (0.5, 1/\sqrt{2})$ such that for each $j \in [k]$, with probability not converging to $0$ as $n \to \infty$, $G[V_j']$ satisfies the hypothesis of Theorem \ref{psd2} where
\begin{equation*}
    V_j' := \{i \in [n]: \theta^i_j \geq t(k, \bm{\alpha}) \}.
\end{equation*}
\end{conj}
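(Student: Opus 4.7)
The plan is to construct the required set $S$ explicitly as $S := \{i \in [n] : \theta^i_j \geq s_0\}$ for an auxiliary threshold $s_0 \in (0.5/t,\, 1]$ chosen after $t$. Since $t < 1/\sqrt{2}$, we automatically have $s_0 > 0.5/t > t$, so $S \subseteq V_j'$. The inequality $s_0 t > 0.5$ then guarantees, using the simplex constraints, that $\langle \bm{\theta}^s, \bm{\theta}^i \rangle \geq s_0 t > 0.5$ for every $s \in S$ and $i \in V_j'$ (the minimum of this bilinear form on the simplex subject to $\theta^s_j \geq s_0$, $\theta^i_j \geq t$ is attained when the two vectors have disjoint support off coordinate $j$, and equals $s_0 t$). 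Hence every edge joining $S$ to the rest of $V_j'$ is strictly positive, so $S$ is disjoint from the set $U$ of nodes of $G[V_j']$ adjacent to a negative edge, making $S$ a legitimate candidate for the set required by Theorem \ref{psd2}.

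Next I would derive uniform weight bounds. Every edge $us$ with $s \in S$ and $u \in V_j'$ satisfies $W_{us} \geq w_+ := \log(s_0 t /(1 - s_0 t)) > 0$, and every edge $uu'$ with $u, u' \in V_j'$ satisfies $\langle \bm{\theta}^u, \bm{\theta}^{u'}\rangle \geq t^2 > 1/4$, so $-W_{uu'} \leq M := \log((1-t^2)/t^2)$. Combining these with the crude bound $|\{u' \in U : W_{uu'} < 0\}| \leq |U| - 1 \leq |V_j'| - |S|$, the hypothesis $(\ref{psd2-as})$ of Theorem \ref{psd2} is implied by the cleaner combinatorial inequality $|S|\, w_+ \geq 2 M (|V_j'| - |S|)$, equivalently $|S|/|V_j'| \geq 2M/(w_+ + 2M)$.

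Because the feature vectors $\bm{\theta}^1, \ldots, \bm{\theta}^n$ are drawn i.i.d.\ from a Dirichlet distribution whose density is strictly positive on the open simplex, the indicators $\mathbf{1}[\theta^i_j \geq s_0]$ and $\mathbf{1}[\theta^i_j \geq t]$ are i.i.d.\ Bernoulli with parameters $\pi_S := \Pr[\theta_j \geq s_0] > 0$ and $\pi_t := \Pr[\theta_j \geq t] > 0$. A Hoeffding bound then forces $|S|/n$ and $|V_j'|/n$ to lie within, say, half of their means except with probability $e^{-\Omega(n)}$, so $|S|/|V_j'| \to \pi_S/\pi_t$ in probability, and the combinatorial inequality above holds for all sufficiently large $n$ as soon as $\pi_S/\pi_t > 2M/(w_+ + 2M)$, i.e.\ as soon as $w_+\, \pi_S > 2M(\pi_t - \pi_S)$.

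The main obstacle is to verify that the pair $(t, s_0)$ can actually be tuned so that $w_+ \pi_S > 2M(\pi_t - \pi_S)$ for a given Dirichlet parameter $\bm{\alpha}$, which is precisely what makes the threshold $t(k, \bm{\alpha})$ in the conjecture depend on both $k$ and $\bm{\alpha}$. As $s_0 \uparrow 1$ the ratio $\pi_S/\pi_t$ shrinks toward $0$ while $w_+$ grows toward $\log(t/(1-t))$, and as $t \downarrow 1/2$ the constant $M$ grows toward $\log 3$ while $w_+$ shrinks, so neither extreme is obviously favorable. I would attack this either by (i) exploiting the fact that for Dirichlet$(\alpha \v{e})$ with small $\alpha$ the marginal of $\theta_j$ piles mass near $\{0, 1\}$, so $\pi_S$ remains a nontrivial fraction of $\pi_t$ even when $s_0$ is close to $1$; or (ii) sharpening the pessimistic bound $|U| \leq |V_j'| - |S|$ by replacing it with a concentration estimate on $|\{u' \in U : \langle \bm{\theta}^u, \bm{\theta}^{u'}\rangle < 1/2\}|$, which is typically strictly smaller than $|U|$ and itself concentrates to a constant multiple of $n$. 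Either route should produce a value $t \in (0.5, 1/\sqrt{2})$ for which $(\ref{psd2-as})$ holds with probability $1 - o(1)$ (and hence, a fortiori, with probability not converging to $0$), matching the empirical behavior in Table \ref{neg-strong-cond}.
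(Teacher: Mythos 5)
The paper does not prove this statement: it is stated as a conjecture and supported only by the computational evidence in Table \ref{neg-strong-cond}, so there is no proof of record to compare yours against. Your skeleton is sensible and goes further than the paper does. The construction $S := \{i : \theta^i_j \geq s_0\}$ with $s_0 t > 1/2$ correctly forces $\langle \bm{\theta}^s, \bm{\theta}^i\rangle \geq s_0 t > 1/2$ for all $s \in S$, $i \in V_j'$ (since all simplex coordinates are non-negative, the inner product is at least the product of the $j$-th coordinates), so $S$ is indeed disjoint from $U$; the uniform bounds $W_{us} \geq w_+$ and $-W_{uu'} \leq M$ follow from monotonicity of the logit; and the Hoeffding step is routine since the $\theta^i_j$ are i.i.d.\ with a Beta marginal placing positive mass above any threshold below $1$.

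The genuine gap is the one you flag yourself, and it is more serious than your write-up suggests: the reduction to $w_+\pi_S > 2M(\pi_t - \pi_S)$ is likely \emph{not} satisfiable with your crude worst-case bounds for the parameter regime the paper actually tests. The constraint $s_0 > 0.5/t$ together with $s_0 \leq 1$ forces $s_0 t$ to sit barely above $1/2$ unless $t$ is pushed toward $1/\sqrt{2}$, so $w_+ = \log(s_0 t/(1-s_0 t))$ is necessarily small, while $M = \log((1-t^2)/t^2)$ is of comparable or larger size; e.g.\ at $t = 0.6$ (the paper's empirical cutoff for $k=3$, $\bm{\alpha}=0.3\v{e}$) one needs $s_0 > 5/6$, giving $w_+ < \log 1.5 \approx 0.41$ against $M \approx 0.58$, and a rough evaluation of the Beta$(0.3, 0.6)$ tail probabilities then puts $\pi_S/\pi_t$ well below the required $2M/(w_+ + 2M)$. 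So your route (ii) --- replacing the pessimistic bounds $|\{u' : W_{uu'}<0\}| \leq |U|$ and $-W_{uu'} \leq M$ by concentration estimates on the actual number and total magnitude of negative edges incident to a node, both of which are much smaller because negative within-cluster inner products concentrate just below $1/2$ --- is not an optional refinement but almost certainly the essential missing ingredient. Until that step is carried out (or the existence of a workable $(t, s_0)$ pair is verified analytically for a general constant $\bm{\alpha}$, noting also that for non-symmetric $\bm{\alpha}$ the marginal of $\theta_j$ depends on $j$), the argument remains a reduction rather than a proof.
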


\subsection{Proofs}
In this section, we include proofs of Theorems \ref{int-opt}, \ref{psd1}, and \ref{psd2} stated in Section \ref{1d-theory}.
\begin{proof}[Proof of Theorem \ref{int-opt}]
Our analysis uses SDP duality and therefore note that the dual of (\ref{p-1d}) is
\begin{equation}
\label{d-1d}
\tag{D-1D}
\begin{aligned}
& \min_{(Y, Z, \v{y})} && \v{e}^T \v{y} \\
& \: \mathrm{s.t.} && Y \geq 0 \\
&&& Z \succeq 0 \\
&&& W+Y+Z = Diag(\v{y}).
\end{aligned}
\end{equation}
As mentioned in the theorem statement, $X^*$ is the cluster matrix corresponding to the partition $\{V_1, \dots, V_{k}, \{v\}_{v \in V_{stray}}\}$.

Both optimization problems (\ref{p-1d}) and (\ref{d-1d}) have strictly feasible solutions. For instance, $X' := 0.5I + 0.5E$ is a positive, positive definite matrix which is feasible for (\ref{p-1d}). Similarly, $Y' := E$, $Z' := (\norm{W+E} + \epsilon) I - (W+E) $ and $\v{y}' := (\norm{W+E} + \epsilon) \v{e}$ gives a strictly feasible solution $(Y', Z', \v{y}')$ for (\ref{d-1d}) for any $\epsilon > 0$. Therefore using the Karush-Kuhn-Tucker (KKT) optimality conditions, we observe that $X^* \in \mathbb{S}^n$ is an optimal solution for (\ref{p-1d}) if and only if $X^*$ is feasible for (\ref{p-1d}) and there exists a feasible solution of (\ref{d-1d}), $(Y^*, Z^*, \v{y}^*)$ such that:
\begin{itemize}
    \item $X^*_{ij} Y^*_{ij} = 0, \forall i, j \in [n]$
    \item $\langle X^*, Z^*\rangle = 0$.
\end{itemize}
$X^*$ has non-negative entries with each diagonal entry being equal to one. Additionally, up to a permutation of its rows and columns, it is a block diagonal matrix in which each non-zero diagonal block is the matrix of all ones. Therefore $X^*$ is feasible for (\ref{p-1d}), and in the rest of the proof, we explicitly construct $(Y^*, Z^*, \v{y}^*)$.

For each $j \in [k]$, we set
    \begin{align*}
        Y^*(V_j, V_j) &= 0 \\
        Z^*(V_j, V_j) &= L(G[V_j]) \\
        \v{y}^*(V_j) &= W(V_j, V_j)\v{e}.
    \end{align*}
    
For each distinct $j, j' \in [k]$, we set
\begin{align*}
    Y^*(V_j, V_{j'}) &= -W(V_j, V_{j'}) \\
    Z^*(V_j, V_{j'}) &= 0.
\end{align*}

For each stray node $v \in V_{stray}$, we set
\begin{align*}
Y^*(v, :) &= -W(v, :) && \text{(and  $Y^*(:, v) = -W(:, v)$)} \\
Z^*(v, :) &= 0 && \text{(and  $Z^*(:, v) = 0$)} \\
y^*(v) &= 0.
\end{align*}

Because each pair of nodes lying in distinct clusters shares a negative edge and because each stray node shares a negative edge with every other node in the graph, we have that $Y^* \geq 0$. Similarly, because $L(G[V_j]) \succeq 0$ for each $j \in [k]$, we have that $Z^* \succeq 0$. 

Matrices $X^*$ and $Y^*$ have disjoint supports by construction, and therefore $X^*_{ij}Y^*_{ij} = 0$ for each $i, j\in [n]$. Moreover
\begin{align*}
    \langle X^*, Z^* \rangle &= \sum\limits_{j \in [k]}\langle X^*(V_j, V_j), Z^*(V_j, V_j) \rangle \\
    &= \sum\limits_{j \in [k]}\langle L(G[V_j]), E \rangle \\
    &= 0
\end{align*}
where the last line uses the fact that each row of a Laplacian matrix sums to zero.   

Lastly, we show that the equation $W + Y^* + Z^* = Diag(\v{y}^*)$ is satisfied. For each $j \in [k]$, we have
\begin{align*}
    W(V_j, V_j) + Y^*(V_j, V_j) + Z^*(V_j, V_j) &= W(V_j, V_j) + L(G[V_j]) \\
    & \hspace{2cm} (\text{using the definitions of $Y^*, Z^*$})\\
    &= Diag(\v{y}^*(V_j)). \\
    & \hspace{2cm} (\text{using the definition of $\v{y}^*$})
\end{align*}

For each distinct $j, j' \in [k]$, we have
\begin{align*}
    W(V_j, V_{j'}) + Y^*(V_j, V_{j'}) + Z^*(V_j, V_{j'}) &= 0
\end{align*}
using the definitions of $Y^*, Z^*$. Similarly, for each stray node $v$, we have
\begin{align*}
    W(v, :) + Y^*(v, :) + Z^*(v, :) &= 0 \\
    W(:, v) + Y^*(:, v) + Z^*(:, v) &= 0 
\end{align*}
using the definitions of $Y^*, Z^*$. 
\end{proof}

Now we provide proofs of Theorems \ref{psd1} and \ref{psd2} which provide combinatorial sufficient conditions for Laplacian positive semidefiniteness. 

\begin{proof}[Proof of Theorem \ref{psd1}]
Pick any negative edge $ii'$ in $G$, and let $i-i_1-i', \dots, i-i_m-i'$ denote $m$ disjoint two-edge $ii'$-paths of positive weights. Consider the subgraph of $G$ containing edge $ii'$ and these $m$ disjoint paths as shown in Figure \ref{2ep-subgraph}.

\begin{figure}
\centering
\begin{tikzpicture}
\draw (0,0) node[anchor=north]{$i$}
  -- (4,0) node[anchor=north]{$i'$};

\draw (0,0) node[anchor=north]{}
  -- (-0.5, 1) node[anchor=south]{$i_1$};
\draw (-0.5, 1) node[anchor=north]{}
  -- (4,0) node[anchor=north]{};

\draw (0,0) node[anchor=north]{}
  -- (0.5, 1) node[anchor=south]{$i_2$};
\draw (0.5, 1) node[anchor=north]{}
  -- (4,0) node[anchor=north]{};

\draw[dotted] (1.5, 1) node[anchor=north]{}
  -- (3.5, 1) node[anchor=north]{};

\draw (0,0) node[anchor=north]{}
  -- (4.5, 1) node[anchor=south]{$i_m$};
\draw (4.5, 1) node[anchor=north]{}
  -- (4,0) node[anchor=north]{};
\end{tikzpicture}
    \caption{Subgraph of $G$ containing negative edge $ii'$ and $m$ dijsoint two-edge $ii'$-paths of positive weights.}
\label{2ep-subgraph}
\end{figure}
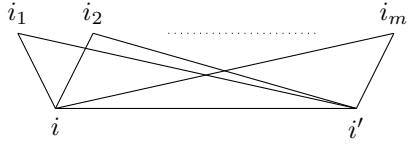

The contribution of this subgraph to the Laplacian of $G$ is the matrix, padded appropriately with zeros,
\begin{equation*}
    \begin{bmatrix}
    W_{ii'} + \sum\limits_{l \in [m]}W_{ii_l} & -W_{ii'} & -W_{ii_1} & -W_{ii_2} & \dots & -W_{ii_m} \\
    -W_{ii'} & W_{ii'} + \sum\limits_{l \in [m]}W_{i'i_l} & -W_{i'i_1} & -W_{i'i_2} & \dots & -W_{i'i_m} \\
    -W_{ii_1} & -W_{i'i_1} & W_{ii_1} + W_{i'i_1} & 0 & \dots & 0 \\
    -W_{ii_2} & -W_{i'i_2} & 0 & W_{ii_2} + W_{i'i_2} & \dots & 0 \\
    \vdots & \vdots & \vdots  & \vdots & \ddots & \vdots \\
    -W_{ii_m} & -W_{i'i_m} & 0 & 0 & \dots & W_{ii_m} + W_{i'i_m} 
    \end{bmatrix}.
\end{equation*}
Now since each of $W_{ii_1} + W_{i'i_1}, \dots, W_{ii_m} + W_{i'i_m}$ is positive, using the Schur complement condition for positive semidefiniteness, the above matrix is positive semidefinite if and only if the $2\times 2$ matrix
\begin{equation}\label{sc-2ep}
\begin{bmatrix}
W_{ii'} + \sum\limits_{l \in [m]}W_{ii_l} & -W_{ii'} \\
-W_{ii'} & W_{ii'} + \sum\limits_{l \in [m]}W_{i'i_l} 
\end{bmatrix} - \mathlarger{\sum}\limits_{l \in [m]} \left(
\dfrac{\begin{bmatrix}
W_{ii_l} \\ W_{i'i_l}
\end{bmatrix}\begin{bmatrix}
W_{ii_l} & W_{i'i_l}
\end{bmatrix}}{W_{ii_l} + W_{i'i_l}} \right)
\end{equation}
is positive semidefinite. However, the matrix in (\ref{sc-2ep}) can be rewritten as
\begin{equation*}
    \left(W_{ii'} + \sum\limits_{l \in [m]}\dfrac{W_{ii_l}W_{i'i_l}}{W_{ii_l} + W_{i'i_l}} \right) \begin{bmatrix}
    1 & -1 \\
    -1 & 1
    \end{bmatrix}
\end{equation*}
which is positive semidefinite if and only if 
\begin{equation*}
    -W_{ii'} \leq \sum\limits_{l \in [m]}\dfrac{W_{ii_l}W_{i'i_l}}{W_{ii_l} + W_{i'i_l}}
\end{equation*}
which proves the desired statement. 
\end{proof}

\begin{proof}[Proof of Theorem \ref{psd2}]
For notational ease, define $L:=L(G)$. Label the nodes of $G$ using the set $[n]$ and assume, without loss of generality, that $U=[m]$ for some $m < n$, and $S = \{m+1, \dots, m+\lvert S\rvert\}$. We define matrix $C \in \mathbb{S}^n$ as follows. For each $u, u' \in U$,
\begin{equation*}
    C_{uu'} := \begin{cases} L_{uu'} & \text{ if } u\neq u' \\
    \sum\limits_{l \in [m]\setminus \{u\}} \lvert L_{ul} \rvert & \text{ if } u=u'
    \end{cases}
\end{equation*}
Moreover $C(U, S) := \dfrac{-C(U, U)\v{e}\v{e}^T}{\lvert S \rvert}$ and $C(S, U) := C(U, S)^T$. Lastly, we set $C(S, S) := \dfrac{\v{e}^TC(U, U)\v{e}}{\lvert S\rvert} I$, and we set all other entries of $C$ to be zeros. That is,
\begin{equation*}
    C = \begin{bmatrix}C(U,U) & C(U, S) & 0 \\
    C(S,U) & C(S, S) & 0 \\
    0 & 0 & 0\end{bmatrix}.
\end{equation*}
In the rest of the proof, we argue that each of $C$ and $L-C$ is positive semidefinite, thereby proving the positive semidefiniteness of $L$. 

To show the positive semidefiniteness of $C$, it suffices to show the positive semidefiniteness of $C(U \cup S, U \cup S)$. First note that using the diagonal dominance property in $C(U,U)$, we conclude that $C(U,U)$ is positive semidefinite. Morever, since each node in $U$ is adjacent to at least one edge with a negative weight each entry of $C(U,U)\v{e}$ is positive. This implies that $\v{e}^TC(U,U)\v{e}$ is positive which in turn implies that $C(S, S)$ is invertible. Using the Schur complement condition for positive semidefiniteness, $C$ is positive semidefinite if and only if $C(U,U)-\dfrac{C(U,S)C(S,U)\lvert S\rvert}{\v{e}^TC(U,U)\v{e}}$ is positive semidefinite. Substituting for $C(U,S)$ and $C(S,U)$, we get
\begin{equation*}
    \begin{aligned}
    C(U,U)-\dfrac{C(U,S)C(S,U)\lvert S\rvert}{\v{e}^TC(U,U)\v{e}} &= C(U,U)-\dfrac{C(U,U)\v{e}\v{e}^T\v{e}\v{e}^TC(U,U)}{\lvert S\rvert\v{e}^TC(U,U)\v{e}} \\
    &= C(U,U)-\dfrac{C(U,U)\v{e}\v{e}^TC(U,U)}{\v{e}^TC(U,U)\v{e}} \\
    &= C(U,U)^{1/2}\left(I-\dfrac{C(U,U)^{1/2}\v{e}\v{e}^TC(U,U)^{1/2}}{\v{e}^TC(U,U)\v{e}}\right)C(U, U)^{1/2}
    \end{aligned}
\end{equation*}
where the last line from bottom uses $\v{e}^T\v{e} = \lvert S\rvert$ and the last line uses the positive semidefiniteness of $C(U,U)$. Therefore to argue for the positive semidefiniteness of the last term in the above chain, it suffices to show that $I-\dfrac{C(U,U)^{1/2}\v{e}\v{e}^TC(U,U)^{1/2}}{\v{e}^TC(U,U)\v{e}}$ is positive semidefinite. This follows from simply noticing that $\dfrac{C(U,U)^{1/2}\v{e}\v{e}^TC(U,U)^{1/2}}{\v{e}^TC(U,U)\v{e}}$ is a rank-one matrix with eigenvalue $1$. Thus $C$ is positive semidefinite. 

To show that $L-C$ is also positive semidefinite, we show that it is the Laplacian of a graph with non-negative weights. First notice that $C\v{e} = 0$. Indeed, we have
\begin{equation*}
    C\v{e} = \begin{bmatrix}C(U,U)\v{e} + C(U,S)\v{e} \\ C(S,U)\v{e} + C(S,S)\v{e} \\ 0 \end{bmatrix}
\end{equation*}
where
\begin{align*}
    C(U,U)\v{e} + C(U,S)\v{e} &= C(U,U)\v{e} - C(U,U)\v{e} = 0
\end{align*}
using the construction of $C(U,S)$, and
\begin{align*}
    C(S,U)\v{e} + C(S,S)\v{e} &= \dfrac{-\v{e}\v{e}^TC(U,U)\v{e}}{\lvert S\rvert} + \dfrac{\v{e}^TC(U,U)\v{e}\v{e}}{\lvert S\rvert } = 0
\end{align*}
using the constructions of $C(S,U)$ and $C(S,S)$. Subsequently, using the fact that $L\v{e}=0$, we conclude that $(L-C)\v{e} = 0$.

Defining set $R:= V\setminus (U \cup S )$, we now show that each off-diagonal entry of
\begin{equation*}
    L-C = \begin{bmatrix}L(U,U)-C(U,U) & L(U,S)-C(U, S) & L(U, R) \\
    L(S,U)-C(S,U) & L(S,S)-C(S, S) & L(S,R) \\
    L(R,U) & L(R,S) & L(R,R)\end{bmatrix}
\end{equation*}
is non-positive. For each $u, u'\in U$, we have $L_{uu'}-C_{uu'}=0$ by construction. For each $u \in U, s \in S$, we have
\begin{align*}
    L_{us} - C_{us} &= L_{us} + \dfrac{1}{\lvert S\rvert}\sum\limits_{u'\in U}C_{uu'} && (\text{by construction of } C(U,S)) \\
    &= L_{us} + \dfrac{C_{u u}}{\lvert S\rvert} + \dfrac{1}{\lvert S\rvert}\sum\limits_{u'\in U\setminus\{u\}}C_{u u'} \\
    &= L_{us} + \dfrac{1}{\lvert S\rvert} \left( \sum\limits_{u'\in U\setminus\{u\}}\lvert L_{uu'} \rvert + L_{u u'} \right) && (\text{by construction of }C(U,U)) \\
    &= L_{us} + \dfrac{1}{\lvert S\rvert} \sum\limits_{\substack{u'\in U\setminus\{u\}: \\ L_{uu'}>0}} 2L_{u u'}  \\
    &= -W_{us} - \dfrac{1}{\lvert S\rvert} \sum\limits_{\substack{u'\in U\setminus\{u\}: \\ W_{uu'}<0}} 2W_{u u'} &&(\because L=L(G))\\
    &\leq 0. && ({\text{using (\ref{psd2-as})} }) 
\end{align*}

Now it remains to consider the signs of the entries of $L(V, R)$ and the off-diagonal entries of $L(S,S)-C(S,S)$. Observe that each entry of $L(V,R)$ is non-negative since each entry of $W(V, R)$ is non-positive. Indeed any entry in $W(V,R)$ corresponds to an edge whose one endpoint lies in $R$; note that for a negative edge, both endpoints lie in $U$ by definition. Lastly, every off-diagonal entry of $L(S,S)-C(S,S)$ is non-negative since such entries are $0$ in $C(S,S)$, by construction, and non-negative in $L(S,S)$ since they correspond to edges whose both endpoints lie in $S$.

Therefore we have shown that $L-C$ is a Laplacian matrix for a graph with non-negative weights, and is consequently positive semidefinite. Since we have shown the positive semidefiniteness of both $C$ and $L-C$, we conclude that $L$ is positive semidefinite. 
\end{proof}

\subsection{Lack of Robustness}\label{1d-robustness}
As discussed in Section \ref{1d-theory}, we are interested in understanding the robustness of \aoned\ recovery algorithm in the presence of noisy nodes, i.e. fringe nodes that are close to being stray nodes. However, through computational experiments, it is observed that \aoned\ seems to have undesirable behavior in this setting. In particular, there exist pathological instances in which the output of \aoned\ contains groups of noisy nodes as spurious cluster. The following example further illustrates this phenomenon.   

\begin{example} \label{1d-bad1}
Consider a graph $G$ on $n=25$ nodes containing $k=3$ clusters. Suppose $G$ has a cluster $j$ containing $6$ nodes and the three-dimensional features of these nodes are as shown in the rows of the $6 \times 3$ matrix below. 
\begin{equation*}
\begin{bmatrix}
1.00    &      0.00     &     0.00 \\
0.79    &      0.00     &     0.21 \\
1.00    &      0.00     &     0.00 \\
0.53    &      0.47     &     0.00 \\
0.53    &      0.47     &     0.00 \\
0.51    &      0.49     &     0.00
\end{bmatrix}    
\end{equation*}
The submatrix of the output of \aoned\ corresponding to the nodes in cluster $j$ is
\begin{equation*}
\begin{bmatrix}
1 &    1    & 1 &    0 &    0 &    0 \\
1 &    1    & 1 &    0 &    0 &    0 \\
1 &    1    & 1 &    0 &    0 &    0 \\
0 &     0  &   0&     1 &    1 &    1 \\
0 &    0     &0 &    1 &    1 &    1 \\
0 &    0&     0 &    1 &    1 &    1
\end{bmatrix}.
\end{equation*}
The above matrix is not the matrix of all ones. In fact, it breaks down the true cluster into two clusters by creating one cluster each for strong and fringe nodes thereby creating a spurious cluster made up of the fringe nodes. 
\end{example}

This apparent limitation of \aoned\ demotivates a theoretical analysis of cluster recovery using fractional optimal solutions of (\ref{p-1d}).

\section{\texorpdfstring{$\anormd$}{} Recovery Algorithm} \label{nd-sec}
%----------------------------------------------------------------------

We propose SDP formulation (\ref{p-nd}) obtained by replacing the $n$ diagonal constraints of (\ref{p-1d}) with a single $\ell_2$-norm constraint. Based on this formulation, we propose a novel recovery algorithm, called \anormd, which involves no tuning parameter.

\begin{algorithm}
\caption{\anormd}\label{alg-nd}
\textbf{Input:} Graph $G=(V, W)$ generated according to NFM \\
\textbf{Output:} Symmetric matrix $X^c$ of the same dimension as $W$ whose each entry is in $\{0, 1\}$
\begin{algorithmic}[1]
\STATE $X^* = \arg \max\ \langle W, X\rangle \text{ s.t. } X\geq 0, X\succeq 0,\norm{diag(X)}\leq 1$
\FOR{$t$ in entries of $X^*$ sorted in non-increasing order}
\STATE $X^c_{ij} = $ Round($X^*, t$)
\IF{there exists a non-empty $V' \subseteq V$ such that $X^c(V', V')$ is the cluster matrix for some partition of $V'$}
\STATE \textbf{break}
\ELSE
\STATE $X^c = 0$
\ENDIF
\ENDFOR
\end{algorithmic}
\end{algorithm}
Note that the output of \anormd\ can possibly be the zero matrix and therefore does not define a clustering for the input graph or any of its subgraphs. However, the theory developed in Section \ref{nd-theory} provides conditions on the input graph sufficient for the output of \anormd\ to induce a clustering for some subgraph of the input graph.

\begin{equation}
\label{p-nd}
\tag{P-ND}
\begin{aligned}
& \max_X && \langle W, X \rangle \\
& \: \mathrm{s.t.} && X \geq 0 \\
&&& X \succeq 0 \\
&&& \norm{diag(X)} \leq 1.
\end{aligned}
\end{equation}

SDPs (\ref{p-1d}) and (\ref{p-nd}) both have the non-negativity and positive semidefiniteness constraints on the variable matrix. The combination of these constraints, i.e. the set $\{X : X \geq 0, X \succeq 0\}$, forms the so-called \emph{doubly non-negative (DNN) cone}. This cone has been studied in the context of SDP relaxations for other graph problems such as the minimum cut problem \cite{li2021strictly} and the quadratic assignment problem \cite{hu2019facial, oliveira2018admm}.

The rounding procedure in \anormd\ is based on the observation from computational experiments that the entries of an optimal solution of (\ref{p-nd}) corresponding to the recovered clusters are larger compared to, and therefore well-separated from, the rest of the entries. This implies the existence of a fixed threshold for rounding; however, computational experiments also suggest the dependence of this rounding threshold on problem parameters $n, k$ and $\bm{\alpha}$. An algorithmic dependence on $k$ and $\bm{\alpha}$ is undesirable, especially in Correlation Clustering, since these parameters are latent and encode information about the number and size of clusters in the graph. It is not clear to us whether a fixed-threshold-based rounding procedure exists which does not require prior estimate of $k$ and $\bm{\alpha}$, and this motivates the rounding procedure in \anormd\ which adapts to matrix being rounded. 

\subsection{Theoretical Guarantees}\label{nd-theory}
Similar to our approach for the \aoned\ recovery algorithm, we first prove exact cluster recovery under deterministic conditions on the input graph, followed by understanding the validity of these deterministic conditions for the NFM, and the robustness properties of (\ref{p-nd}).

\begin{as}\label{nd-as}
Suppose $G=(V,W)$ be a graph on $n$ nodes where $W = \v{q} \v{q}^T - D + N$. Let $U\subseteq V$ contain all nodes of $G$ adjacent to a negative edge. That is, $U := \{v \in V: W_{vw} < 0 \text{ for some } w \in V \}$. Suppose the following hold. 
\begin{enumerate}
    \item $\v{q}$ is a positive $n$-dimensional vector satisfying
    \begin{equation*}
        \max(\v{q}^{\circ 4/3}) \leq \dfrac{\norm{\v{q}^{\circ 2/3}}^2 }{45}.
    \end{equation*} \label{v}
Intuitively, this condition is likely to hold if the smallest entry in $\v{q}$ is not too small. 
\item There exists $S \subseteq V \setminus U$ such that for each $u \in U$ and $s \in S$, we have
    \begin{equation*}
        \lvert S\rvert q_s^{1/3}  W_{us} \geq -6 \left( \sum\limits_{\substack{u' \in U: \\ W_{uu'} < 0}} q_{u'}^{1/3}  W_{u u'} \right).
    \end{equation*} \label{w}
\item $N$ is a $n \times n$ symmetric matrix whose each diagonal entry is zero and, for each $i \in [n]$, satisfies
    \begin{equation*}
        \norm{\v{n}^i} \leq \dfrac{q_i \norm{\v{q}^{\circ 2/3}}^2 }{45 \norm{\v{q}^{\circ 1/3}}}.
    \end{equation*}
(Recall the notation that for each $i \in [n]$, $\v{n}^i$ denotes row $i$ of matrix $N$.) \label{n}
\item $D=Diag(\v{q} \circ \v{q})$. This is a \emph{diagonal correction matrix} chosen to ensure that the diagonal of $W$ is indeed zero, as defined. Unlike (\ref{p-1d}), the analysis of (\ref{p-nd}) depends on the diagonal entries of $W$, and therefore it is reasonable to assume each of them to be $0$.  \label{d}
\end{enumerate}
\end{as}

\begin{theorem}\label{nd-rec}
Let $G=(V,W)$ be a graph generated using the NFM, and suppose that for each $j \in [k]$, $W(V_j, V_j)$ satisfies Assumption \ref{nd-as}. Then (\ref{p-nd}) has an optimal solution $X^*$ satisfying $X^*_{ii'} > 0$ if and only if nodes $i$ and $i'$ belong to the same cluster.
\end{theorem}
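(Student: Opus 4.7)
The plan is to construct an explicit primal-dual pair satisfying the KKT conditions for (\ref{p-nd}). Treating $\norm{diag(X)}\le 1$ as a second-order-cone constraint, the dual becomes: minimize $\norm{\v{v}}$ over $(Y,Z,\v{v})$ with $Y\ge 0$, $Z\succeq 0$, and $W+Y+Z=Diag(\v{v})$, which differs from (\ref{d-1d}) only in its objective. The KKT conditions then require primal and dual feasibility, complementary slackness $\langle Y,X\rangle=0$ and $ZX=0$, together with the SOC alignment condition $\v{v}=\norm{\v{v}}\,diag(X)$ with $\norm{diag(X)}=1$.

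I would make an ansatz in which the candidate $X^*$ is block-diagonal on the clusters $V_1,\ldots,V_k$, with each block rank-one of the form $X^*(V_j,V_j)=c_j\,\v{q}_j^{\circ 1/3}(\v{q}_j^{\circ 1/3})^T$ and zero entries on stray nodes. The exponent $1/3$ is forced by the alignment condition: in the clean case $W(V_j,V_j)=\v{q}_j\v{q}_j^T-Diag(\v{q}_j^{\circ 2})$, the simultaneous requirements $Z^*(V_j,V_j)\v{q}_j^{\circ p}=0$ and $\v{v}^*(V_j)\propto \v{q}_j^{\circ 2p}$ jointly force $p=1/3$, which explains the appearance of $\v{q}^{\circ 2/3}$ and $\v{q}^{\circ 1/3}$ throughout Assumption~\ref{nd-as}. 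The scalars $c_j$ are chosen so that $\norm{diag(X^*)}=1$. Between clusters I would set $Y^*(V_j,V_{j'})=-W(V_j,V_{j'})$, which is entrywise nonnegative since cluster-to-cluster edges are negative in the NFM; within a cluster, $Y^*(V_j,V_j)$ is set to the negative part of $N_j$ to absorb the noise-induced negative edges, and $Z^*(V_j,V_j)$ is then determined by $W(V_j,V_j)+Y^*(V_j,V_j)+Z^*(V_j,V_j)=Diag(\v{v}^*(V_j))$ with $\v{v}^*(V_j)$ fixed to enforce alignment and complementary slackness.

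The main obstacle is verifying $Z^*(V_j,V_j)\succeq 0$ in the presence of the diagonal correction $D_j$ and the noise $N_j$. I plan to adapt the proof of Theorem~\ref{psd2}: decompose $Z^*(V_j,V_j)=C_j+(Z^*(V_j,V_j)-C_j)$, where $C_j$ is a carefully constructed PSD matrix supported on $U\cup S$ (with $U,S$ as in Assumption~\ref{nd-as}(\ref{w})), and verify that the residual is a weighted graph Laplacian with nonnegative edge weights. The factor of $6$ and the $q_s^{1/3}$ reweighting in Assumption~\ref{nd-as}(\ref{w}) (as opposed to the factor of $2$ with unit weights in Theorem~\ref{psd2}) arise because the rank-one ansatz effectively conjugates the relevant block of $Z^*$ by $Diag(\v{q}_j^{\circ 1/3})$, so the inherited combinatorial condition must cover the corresponding rescaling. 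Assumption~\ref{nd-as}(\ref{v}) would be used to dominate the contribution of $D_j$ so that $\v{v}^*(V_j)$ remains a positive multiple of $\v{q}_j^{\circ 2/3}$, and Assumption~\ref{nd-as}(\ref{n}) would be used to bound the spectral perturbation contributed by $N_j$. Once KKT holds, the positivity pattern $X^*_{ii'}>0$ iff $i,i'$ lie in the same cluster is immediate from the block-diagonal, entrywise-positive, rank-one structure of each cluster block, while uniqueness of the support of the optimum would follow by exhibiting a strictly feasible dual of the type used in the proof of Theorem~\ref{int-opt}.
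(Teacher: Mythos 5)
Your overall KKT framing, the handling of between-cluster blocks and stray nodes, and your reading of where the exponent $1/3$ and the factor $6$ in Assumption \ref{nd-as}(\ref{w}) come from (a conjugation of the $Z^*$ block by a diagonal matrix close to $Diag(\v{q}_j^{\circ 1/3})$, reducing to Theorem \ref{psd2}) all match the paper. But there are two genuine gaps. First, setting $Y^*(V_j,V_j)$ to a nonzero matrix built from the negative part of $N_j$ is incompatible with the entrywise complementary slackness condition $X^*_{ij}Y^*_{ij}=0$: your cluster block $X^*(V_j,V_j)$ is rank-one with a strictly positive vector, hence entrywise positive, so $Y^*(V_j,V_j)$ is forced to be identically zero. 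The within-cluster negative edges (the asymmetric noise that is the whole point of the model) must be carried entirely by $Z^*(V_j,V_j)$, which is exactly why the PSD verification of that block is the crux and why condition (\ref{w}) is stated the way it is.

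Second, and more fundamentally, the ansatz $X^*(V_j,V_j)=c_j\,\v{q}_j^{\circ 1/3}(\v{q}_j^{\circ 1/3})^T$ cannot be made to satisfy $\langle X^*,Z^*\rangle=0$ together with the alignment condition. With $Y^*(V_j,V_j)=0$ forced and $\v{v}^*(V_j)$ forced proportional to $diag(X^*(V_j,V_j))=c_j\,\v{q}_j^{\circ 2/3}$, the condition $Z^*(V_j,V_j)\,\v{r}=0$ for the rank-one vector $\v{r}$ is equivalent to the exact equation $W(V_j,V_j)\,\v{r}=\v{r}^{\circ 3}$ (up to scaling). The vector $\v{q}_j^{\circ 1/3}$ solves this only for the clean matrix $\v{q}_j\v{q}_j^T$; once $-D_j+N_j$ is present the residual $(D_j-N_j)\v{q}_j^{\circ 1/3}$ is not proportional to $\v{q}_j$ and there is no free parameter left to absorb it ($\v{v}^*$ is already pinned by alignment). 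The paper's resolution is the step your plan is missing: Lemma \ref{u} shows that under conditions (\ref{v}), (\ref{n}), (\ref{d}) the map $\v{x}\mapsto(W\v{x})^{\circ 1/3}$ sends the box $\left[\tfrac12(\beta\v{q})^{\circ 1/3},\tfrac32(\beta\v{q})^{\circ 1/3}\right]$ into itself, and Brouwer's fixed-point theorem then yields an exact (non-explicit) solution $\v{r}_j$ of $W(V_j,V_j)\v{r}_j=\v{r}_j^{\circ 3}$ near $(\beta_j\v{q}_j)^{\circ 1/3}$; the cluster block is $\v{r}_j\v{r}_j^T/\lambda^*$ and $Z^*(V_j,V_j)=Diag(\v{r}_j\circ\v{r}_j)-W(V_j,V_j)$. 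Conditions (\ref{v}) and (\ref{n}) are consumed by that invariance argument, not by a direct domination of $D_j$ or a spectral bound on $N_j$ in the dual construction. Once you have $\v{r}_j$, your plan for $Z^*\succeq 0$ goes through essentially as you describe: conjugate by $Diag(\v{r}_j)$ to obtain a signed Laplacian and apply Theorem \ref{psd2}, with the $\tfrac12$ and $\tfrac32$ box constants converting the factor $2$ there into the factor $6$ in condition (\ref{w}).
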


In terms of proof techniques, unlike the SDP (\ref{p-1d}), (\ref{p-nd}) does not lend itself to an explicit construction of primal-dual optimal solutions and requires a more elaborate argument using Brouwer fixed-point theory. 

Adopting the two-step approach outlined in Section \ref{warmup}, we are now interested in the following two questions:
\begin{itemize}
    \item What is the probability that, for each cluster, the subgraph induced by the union of strong and some fringe nodes satisfies Assumption \ref{nd-as}?
    \item Is the \anormd\ recovery algorithm robust to the presence of noisy nodes, i.e. fringe nodes that are close to being stray?
\end{itemize}

While we do not provide a precise answer to the first question above, we demonstrate, computationally, the connection between Assumption \ref{nd-as} and the NFM in which the distribution over the unit simplex is chosen to be the Dirichlet distribution. For a graph $G=(V,W)$ generated according to NFM, for each $j \in [k]$, define $V_j'$ to be the union of strong nodes and some fringe nodes in cluster $j$ such that the cut-off for selecting fringe nodes depends on problem parameters $k$ and $\bm{\alpha}$. Let $n_j$ be the cardinality of $V_j'$, and let $\Theta_j$ denote the $n_j \times k$ matrix whose rows contain the feature vectors corresponding to the nodes in $V_j'$. Observe that the univariate function $g: (0, 1) \rightarrow \mathbb{R}$ defined as $g(x):= \log(x/(1-x))$ can be approximated using a linear function $l:(0,1) \rightarrow \mathbb{R}$ defined as $l(x) := c\cdot (2x-1)$ for a suitably chosen positive constant $c$. Then for each $j \in [k]$, we have
\begin{equation}\label{w-approx}
    W(V_j', V_j') \approx c\cdot (2\Theta_j \Theta_j^T- E).
\end{equation}
Now through various computational experiments, we notice that the matrix $c\cdot(2\Theta_j \Theta_j^T- E)$ is almost a rank-one matrix such that eigenvector corresponding to the largest eigenvalue is a positive vector. The following example concretely illustrates these observations.  

\begin{example}\label{w-structure}
Consider a graph generated using the NFM with $n=30$ and $k = 3$. The distribution over the unit simplex is chosen to be the Dirichlet distribution with parameter $\bm{\alpha} = 0.3 \v{e}$. For some cluster $j$, let $V_j'$ be the union of strong nodes and fringe nodes whose feature vectors have largest entry at least $0.6$. The three-dimensional features of the nodes in $V_j'$ are shown in the rows of the matrix $\Theta_j$ below.
\begin{equation*}
    \Theta_j = \begin{bmatrix}
          0.05  &        0.83   &       0.11 \\
          0.04  &        0.69    &      0.27 \\
          0.03   &       0.92   &       0.05 \\
          0.02  &        0.73     &     0.25 \\
          0.11   &       0.88   &       0.01 \\
          0.25      &    0.60     &     0.15 \\
          0.00      &    0.99       &   0.01 \\ 
          0.12      &    0.67    &      0.21 \\ 
          0.01      &    0.95     &     0.04 
    \end{bmatrix}
\end{equation*}

We notice that the matrix $c\cdot(2\Theta_j \Theta_j^T- E)$ with $c = 2.2$ has exactly three non-zero eigenvalues given by $8.57$, $0.25$ and $-0.75$. Moreover the unit eigenvector, $\v{v}_j$, corresponding to the eigenvalue $8.57$ is 
\begin{equation*}
    \v{v}_j = \begin{bmatrix}
          0.33 \\
          0.17 \\
          0.43 \\
          0.22 \\
          0.38 \\
          0.06 \\ 
          0.51 \\
          0.15 \\
          0.45 
    \end{bmatrix}
\end{equation*}
which has all positive entries.
\end{example}

The observations made above regarding the spectral properties of the matrix $2.2 \cdot (2\Theta_j \Theta_j^T - E)$ are further shown to be consistent using the results in Table \ref{nfm-w-struct1}, \ref{nfm-w-struct2}, and \ref{nfm-w-struct3}. Each row in these tables corresponds to $10$ cluster instances generated using the NFM in which the simplex distribution is chosen to be the Dirichlet distribution. We fix $k = 3$ and the Dirichlet parameter $\bm{\alpha} = 0.3 \v{e}$. The first column denotes the range corresponding to the size of the subgraph induced by strong nodes and fringe nodes whose feature vectors have largest entry at least $0.6$; the cut-off of $0.6$ is based on manual parameter search for the given setting of $k$ and $\bm{\alpha}$. The second column counts \emph{eigenvector success}, i.e. the number of instances, out of $10$, for which the eigenvector of $2.2 \cdot (2\Theta_j \Theta_j^T - E)$ corresponding to its largest eigenvalue is positive. For such instances, the third column contains the non-zero eigenvalues of $2.2 \cdot (2\Theta_j \Theta_j^T - E)$.

\begin{table}
  \caption{Structure of subgraph induced by strong nodes and some fringe nodes for each cluster (part 1/3).}
  \label{nfm-w-struct1} 
  \centering
  \begin{tabular}{ccc}
    \toprule
    Cluster size range & Eigenvector success $(/10)$ & Non-zero eigenvalues \\
    \midrule
    $6-10$ & $10$ & $7.8, 0.5, -1.1$ \\
    & & $9.8, 0.4, -0.7$ \\
    & & $9.2, 0.2, -0.4$ \\
    & & $10.1, 0.4, -0.3$ \\
    & & $10.1, 0.9, -0.6$ \\
    & & $8.5, 0.1, -0.3$ \\
    & & $12.7, 0.2, -0.6$ \\
    & & $6.6, 0.5, -0.4$ \\
    & & $9.8, 0.4, -0.5$ \\
    & & $7.4, 0.3, -0.4$ \\
    \midrule
    $11-15$ & $10$ & $16.8, 0.2, -0.5$ \\
    & & $14.2, 0.4, -0.4$ \\
    & & $12.9, 0.6, -0.5$ \\
    & & $16.7, 0.6, -0.7$ \\
    & & $15.8, 0.3, -0.5$ \\
    & & $15.9, 0.8, -0.8$ \\
    & & $16.1, 0.2, -0.4$ \\
    & & $14.0, 0.9, -0.8$ \\
    & & $12.8, 0.5, -0.4$ \\
    & & $14.0, 0.3, -0.5$ \\
    \midrule
    $16-20$ & $10$ & $16.3, 1.0, -1.3$ \\
    & & $18.1, 0.7, -0.9$ \\
    & & $17.0, 0.7, -0.7$ \\
    & & $21.5, 1.2, -1.9$ \\
    & & $14.5, 1.2, -1.3$ \\
    & & $21.2, 0.6, -0.7$ \\
    & & $19.4, 1.2, -1.0$ \\
    & & $21.6, 0.3, -0.6$ \\
    & & $23.5, 1.0, -1.0$ \\
    & & $20.1, 0.6, -1.0$ \\
    \bottomrule
  \end{tabular}
\end{table}

\begin{table}
  \caption{Structure of subgraph induced by strong nodes and some fringe nodes for each cluster (part 2/3).}
  \label{nfm-w-struct2} 
  \centering
  \begin{tabular}{ccc}
    \toprule
    Cluster size range & Eigenvector success $(/10)$ & Non-zero eigenvalues \\
    \midrule
    $21-25$ & $10$ & $26.2, 1.7, -1.8$ \\
    & & $21.6, 1.5, -1.4$ \\
    & & $23.9, 1.1, -1.2$ \\
    & & $18.3, 1.6, -1.5$ \\
    & & $26.5, 0.9, -1.2$ \\
    & & $23.0, 2.5, -3.0$ \\
    & & $23.8, 1.6, -1.7$ \\
    & & $25.6, 1.0, -1.0$ \\
    & & $20.0, 2.0, -2.2$ \\
    & & $23.2, 1.4, -1.5$ \\
    \midrule
    $26-30$ & $10$  & $28.6, 1.2, -1.9$ \\
    & & $34.0, 2.0, -2.0$ \\
    & & $32.2, 1.3, -1.6$ \\
    & & $29.0, 2.4, -2.2$ \\
    & & $30.0, 1.6, -1.9$ \\
    & & $30.6, 1.7, -1.8$ \\
    & & $29.9, 1.4, -1.4$ \\
    & & $35.9, 1.3, -1.3$ \\
    & & $23.0, 1.7, -1.4$ \\
    & & $29.1, 1.5, -1.5$ \\
    \midrule
    $31-35$ & $10$ & $38.8, 1.5, -1.5$ \\
    & & $33.7, 1.8, -1.8$ \\
    & & $38.4, 2.3, -2.2$ \\
    & & $31.9, 1.9, -1.7$ \\
    & & $38.3, 1.7, -1.9$ \\
    & & $33.6, 1.9, -2.2$ \\
    & & $36.3, 2.2, -1.9$ \\
    & & $43.7, 1.1, -1.2$ \\
    & & $29.0, 2.8, -3.2$ \\
    & & $46.7, 0.9, -1.5$ \\
    \bottomrule
  \end{tabular}
\end{table}

\begin{table}
  \caption{Structure of subgraph induced by strong nodes and some fringe nodes for each cluster (part 3/3).}
  \label{nfm-w-struct3} 
  \centering
  \begin{tabular}{ccc}
    \toprule
    Cluster size range & Eigenvector success $(/10)$ & Non-zero eigenvalues \\
    \midrule
    $36-40$ & $10$ & $35.6, 1.8, -2.3$ \\
    & & $44.1, 2.0, -2.5$ \\
    & & $40.2, 2.9, -2.8$ \\
    & & $40.2, 2.5, -2.6$ \\
    & & $44.5, 1.2, -2.0$ \\
    & & $49.8, 1.6, -1.5$ \\
    & & $33.5, 2.2, -2.6$ \\
    & & $35.1, 2.1, -2.1$ \\
    & & $46.3, 2.0, -2.1$ \\
    & & $40.1, 1.8, -1.9$ \\
    \bottomrule
  \end{tabular}
\end{table}
These computational results motivate the following conjecture.
\begin{conj}
Let $G=(V,W)$ be a graph generated using the NFM in which the simplex distribution is chosen to be the Dirichlet distribution with constant parameter $\bm{\alpha}$. Then there exists a scalar $t(k, \bm{\alpha}) \in (0.5, 1/\sqrt{2})$ such that for each $j \in [k]$, with probability not converging to $0$ as $n \to \infty$, the largest eigenvalue of the matrix $4.4\Theta_j \Theta_j^T - 2.2E$ is well-separated from the remaining eigenvalues and the corresponding eigenvector is positive where
\begin{equation*}
    V_j' := \{i \in [n]: \theta^i_j \geq t(k, \bm{\alpha}) \}
\end{equation*}
and
\begin{equation*}
\Theta_j := \Theta(V_j', :).
\end{equation*}
\end{conj}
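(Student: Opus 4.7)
The plan is to exploit the factorization $4.4\Theta_j\Theta_j^T - 2.2E = 2.2\,\Theta_jA\Theta_j^T$ where $A := 2I_k - \v{e}\v{e}^T \in \mathbb{R}^{k\times k}$; this identity uses $\Theta_j\v{e} = \v{e}$ since rows of $\Theta_j$ lie on the simplex. Because the positive scalar $2.2$ affects neither well-separation nor eigenvector positivity, I work with $\hat M_j := \Theta_jA\Theta_j^T$. The crucial consequence is $\operatorname{rank}(\hat M_j) \leq k$: the nonzero eigenvalues of $\hat M_j$ coincide with those of the $k \times k$ matrix $A\Theta_j^T\Theta_j$, and each right eigenvector $\v{w}$ of $A\Theta_j^T\Theta_j$ lifts to an eigenvector $\Theta_j\v{w}$ of $\hat M_j$. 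The plan is therefore to analyze the deterministic $k\times k$ matrix $A\Sigma$, where $\Sigma := \mathbb{E}[\bm\theta\bm\theta^T \mid \theta_j \geq t]$, and transfer spectral information via the concentration $\Theta_j^T\Theta_j/n_j \to \Sigma$.

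First I would analyze $A\Sigma$. Under the symmetric Dirichlet conditioned on $\theta_j \geq t$, exchangeability of the non-$j$ coordinates forces $\bm\mu := \Sigma\v{e} = \mu_j\v{e}_j + \mu_0(\v{e} - \v{e}_j)$ with $\mu_0 = (1-\mu_j)/(k-1)$, and block-diagonalizes $A\Sigma$ into a $2\times 2$ block on the symmetric subspace $\operatorname{span}\{\v{e}_j, \v{e} - \v{e}_j\}$ and a scalar block of eigenvalue $2(b-d)$ and multiplicity $k-2$ on the antisymmetric complement, where $b, d$ are the conditional second moments $\mathbb{E}[\theta_l^2\mid \cdot]$, $\mathbb{E}[\theta_l\theta_{l'}\mid\cdot]$ for distinct $l, l' \neq j$. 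Because $\mu_j > t$ strictly, choosing $t$ in the nonempty interval $(x^*(k), 1/\sqrt{2})$, where $x^*(k) \in (1/2, 1/\sqrt{2})$ is the unique root of $x^2 + (1-x)^2/(k-1) = 1/2$, forces $\|\bm\mu\|^2 > 1/2$ and hence $\gamma := 2\|\bm\mu\|^2 - 1 > 0$; the Rayleigh quotient of $\hat M_j$ at the test vector $\v{e}\in\mathbb{R}^{n_j}$ equals $n_j\gamma(1+o(1))$, providing the lower bound $\lambda_1(A\Sigma) \geq \gamma$ after normalization. The corresponding top right eigenvector lies in the symmetric block and has the form $\v{w}^* = \alpha\v{e}_j + \beta(\v{e}-\v{e}_j)$; in the tractable extremal limit $t \to 1$, $\bm\theta \to \v{e}_j$ almost surely, $\Sigma \to \v{e}_j\v{e}_j^T$, one computes $\v{w}^* = 2\v{e}_j - \v{e}$ (so $\alpha = 1$, $\beta = -1$), and the three key inequalities $\lambda_1(A\Sigma) > \lambda_2^{\text{sym}}$, $\lambda_1(A\Sigma) > 2(b-d)$, and $\alpha t + \beta(1-t) > 0$ all hold (reducing to $1 > 0$, $1 > 0$, $2t - 1 > 0$). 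By continuity of the conditional Dirichlet moments in $t$, these strict inequalities persist on an open sub-interval of $(x^*(k), 1/\sqrt{2})$, from which the scalar $t(k, \bm\alpha)$ promised by the conjecture is extracted.

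Next, the summands $\bm\theta^i(\bm\theta^i)^T$ are i.i.d.\ $k \times k$ PSD matrices of operator norm at most $1$, so matrix Hoeffding gives $\|\Theta_j^T\Theta_j - n_j\Sigma\|_{op} = O(\sqrt{n_j})$ with high probability (constants depending on $k$). Transferring to the symmetric conjugate $\Sigma^{1/2}A\Sigma^{1/2}$, which has the same spectrum as $A\Sigma$, Weyl's inequality gives that the eigenvalues of $\hat M_j$ equal $n_j$ times those of $A\Sigma$ up to additive $O(\sqrt{n_j})$; combined with the $\Theta(1)$ spectral gap of $A\Sigma$ established above, the top eigenvalue of $\hat M_j$ equals $n_j\lambda_1(A\Sigma)(1+o(1))$ and is separated from the rest by $\Omega(n_j)$. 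Davis--Kahan in $\mathbb{R}^k$ gives $\|\v{w} - \v{w}^*\|_2 = O(1/\sqrt{n_j})$ for the actual top right eigenvector $\v{w}$, so the normalized top eigenvector of $\hat M_j$, namely $\Theta_j\v{w}/\|\Theta_j\v{w}\|$, has $i$-th entry $n_j^{-1/2}[\alpha\theta^i_j + \beta(1-\theta^i_j)] + O(1/n_j)$, which is strictly positive uniformly over $i \in V_j'$ once $n_j$ is sufficiently large by the lifted-positivity inequality. Since $|V_j'|$ is $\operatorname{Binom}(n, \mathbb{P}[\theta_j \geq t])$ with $\mathbb{P}[\theta_j \geq t] > 0$, Hoeffding gives $n_j = \Theta(n)$ with probability tending to $1$, so the entire event occurs with probability tending to $1$ (in particular, not converging to $0$).

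The hard part will be the quantitative verification, for an explicit $t \in (1/2, 1/\sqrt{2})$, of the three strict inequalities collected above. The quantities $\alpha, \beta, \lambda_2^{\text{sym}}, b, d$ are determined by the conditional Dirichlet moments $a := \mathbb{E}[\theta_j^2\mid\cdot]$, $b$, $c := \mathbb{E}[\theta_j\theta_l\mid\cdot]$, $d$ via a $2 \times 2$ eigenvalue problem, and these moments have no clean closed form as $t$ varies over $(1/2, 1/\sqrt{2})$. The continuity argument from the $t \to 1$ extremum must be made quantitative in $k$ and $\bm\alpha$ in order to pin down the explicit function $t(k, \bm\alpha)$ and confirm that the resulting interval is nonempty within $(1/2, 1/\sqrt{2})$; this is where the bulk of the technical work lies.
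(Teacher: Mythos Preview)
The statement you are attempting to prove is stated in the paper as a \emph{conjecture}: the authors provide no proof at all, only computational evidence (the three tables reporting the nonzero eigenvalues of $2.2(2\Theta_j\Theta_j^T-E)$ for $k=3$, $\bm\alpha=0.3\v e$, and cutoff $0.6$) that the matrix consistently has one dominant eigenvalue with a positive eigenvector. There is thus no argument in the paper to compare your proposal against; you are going strictly beyond what the paper does.

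Your approach is a reasonable programme. The factorization $4.4\Theta_j\Theta_j^T-2.2E = 2.2\,\Theta_j(2I_k-\v e\v e^T)\Theta_j^T$, obtained from $\Theta_j\v e=\v e$, is the right structural observation: it immediately explains why the paper's tables always show exactly three nonzero eigenvalues when $k=3$, and it reduces the $n_j\times n_j$ spectral question to a fixed $k\times k$ one. The symmetry decomposition of $A\Sigma$ via exchangeability of the non-$j$ Dirichlet coordinates, the matrix-Hoeffding concentration of $\Theta_j^T\Theta_j/n_j$ toward $\Sigma$, and the passage through the symmetric conjugate $\Sigma^{1/2}A\Sigma^{1/2}$ for Weyl and Davis--Kahan are all appropriate.

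The gap you flag at the end is the real obstacle, and it is not minor. Your continuity argument is anchored at $t\to 1$, where the three strict inequalities you need reduce to trivialities; but the conjecture demands $t(k,\bm\alpha)<1/\sqrt{2}\approx 0.707$. Nothing in your outline prevents the sub-interval on which those inequalities persist from lying entirely inside $(1/\sqrt{2},1)$ for some or all $(k,\bm\alpha)$. Closing this requires either explicit two-sided bounds on the conditional Dirichlet second moments $a,b,c,d$ valid uniformly over $t\in(x^*(k),1/\sqrt{2})$---and these have no closed form---or a direct argument, not routed through $t\to 1$, that the top eigenvector of the $2\times 2$ symmetric block lifts via $\Theta_j$ to a positive vector whenever $2\|\bm\mu\|^2>1$. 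Until that is supplied, what you have is a credible plan of attack on an open conjecture rather than a proof.
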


Now let $\v{q}_j := \sqrt{\lambda_j} \v{v}_j$ where $\lambda_j$ and $\v{v}_j$ denote the largest eigenvalue and the corresponding unit eigenvector respectively of $c\cdot (2\Theta_j \Theta_j^T - E)$. We rewrite (\ref{w-approx}) as
\begin{equation}\label{w-approx2}
W(V_j', V_j') = \v{q}_j\v{q}_j^T - D_j + N_j
\end{equation}
where $D_j$ is the $n_j \times n_j$ diagonal matrix $Diag(\v{q}_j\v{q}_j^T)$ and $N_j$ is the $n_j \times n_j$ symmetric matrix whose each diagonal entry is equal to $0$ and the off-diagonal entries are chosen to make so as to make (\ref{w-approx}) hold. That is, matrix $D_j$ applies diagonal correction to ensure $diag(W(V_j, V_j)) = \v{0}$ and matrix $N_j$ captures the error in approximating the logarithmic function $g$ by the linear function $l$ and the error in approximating $c\cdot (2\Theta_j \Theta_j^T - E)$ by $\v{q}_j\v{q}_j^T$. We show, using computational results, the validity of Assumption \ref{nd-as} for quantities $\v{q}_j, N_j, W(V_j, V_j)$ described using (\ref{w-approx2}) in Table \ref{nd-as-expt}. Each row in the table corresponds to $10$ cluster instances generated using the NFM in which the simplex distribution is chosen to be the Dirichlet distribution. We fix $k = 3$ and the Dirichlet parameter $\bm{\alpha} = 0.3 \v{e}$. The first column denotes the range corresponding to the size of the subgraph induced by strong nodes and fringe nodes whose feature vectors have largest entry at least $0.6$; the cut-off of $0.6$ is based on manual parameter search for the given setting of $k$ and $\bm{\alpha}$. The second column counts \emph{C1, C2 success}, i.e. the number of instances, out of $10$, for which vector $\v{q}_j$ and matrix $W(V_j', V_j')$ as highlighted in (\ref{w-approx2}) satisfy conditions \ref{v} and \ref{w} in Assumption \ref{nd-as}. We notice that condition \ref{n} is not satisfied for most instances; however, the violation is by a constant factor in the sense that the ratio 
\begin{equation}\label{c3-ratio}
\dfrac{45 \norm{\v{n}^i} \norm{\v{q}^{\circ 1/3}}} {q_i \norm{\v{q}^{\circ 2/3}}^2}
\end{equation}
for each $i \in [n]$, is bounded above by a constant, albeit much larger than $1$ as desired. Therefore in the fourth column of Table \ref{nd-as-expt}, we present \emph{average C3 upper bound}, i.e. the quantity (\ref{c3-ratio}) first averaged over all nodes in the graph, then averaged over the instances out $10$ runs in which both conditions \ref{v} and \ref{w} are satisfied. 

\begin{table}
  \caption{Verification of Assumption \ref{nd-as}}
  \label{nd-as-expt} 
  \centering
  \begin{tabular}{cccc}
    \toprule
    Cluster size range & C1, C2 success $(/10)$ & Average C3 upper bound \\
    \midrule
    $1201-1300$ & $4$ & $13.3$ \\
    $1301-1400$ & $4$ & $14.1$ \\
    $1401-1500$ & $3$ & $13.2$ \\
    $1501-1600$ & $4$ & $13.3$ \\
    $1601-1700$ & $5$ & $13.4$ \\
    $1701-1800$ & $8$ & $13.5$ \\
    $1801-1900$ & $3$ & $13.6$ \\
    \bottomrule
  \end{tabular}
\end{table}
These computational results partly justify Assumption \ref{nd-as}. Now we turn to the robustness aspect, i.e. understanding the robustness of \anormd\ to the presence of noisy nodes. We begin by revisiting Example \ref{1d-bad1} mentioned in Section \ref{1d-robustness}. In particular, the submatrix of the output of \anormd\ corresponding to the nodes in cluster $j$ is
\begin{equation*}
\begin{bmatrix}
1 &    1    & 1 &    0 &    0 &    0 \\
1 &    1    & 1 &    0 &    0 &    0 \\
1 &    1    & 1 &    0 &    0 &    0 \\
0 &     0  &   0&    0 &    0 &   0 \\
0 &    0     &0 &    0 &    0 &    0 \\
0 &    0&     0 &    0 &    0 &    0
\end{bmatrix}.
\end{equation*}
This shows that \anormd\ is correctly able to cluster the strong nodes, for this example, despite the presence of fringe nodes without creating spurious clusters using the fringe nodes. This observation motivates the following results which show the robustness of the diagonal of an optimal solution of the SDP (\ref{p-nd}) to perturbations of the weighted adjacency matrix $W$. 

\begin{theorem}\label{diag-uniq}
Let $X^*$ be an optimal solution of the SDP (\ref{p-nd}). If $W$ contains at least one positive entry, then $diag(X^*)$ is uniquely determined by $W$. 
\end{theorem}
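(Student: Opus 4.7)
The plan is a short convexity argument based on the strict convexity of the $\ell_2$ unit ball together with the scaling-invariance of the PSD and non-negativity cones. Throughout, let $p^* = \langle W, X^* \rangle$ denote the optimal value of (\ref{p-nd}) and note that the feasible set is convex and compact, so an optimal solution exists.

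\textbf{Step 1: The optimum is strictly positive.} I would first exhibit a feasible solution with positive objective. If $W_{ij} > 0$ for some $i \neq j$ (recall $W_{ii} = 0$ by the NFM construction), take $X = \tfrac{1}{\sqrt{2}}(\v{e}_i + \v{e}_j)(\v{e}_i + \v{e}_j)^T$. It is rank-one PSD, entry-wise non-negative, and its diagonal is $\tfrac{1}{\sqrt{2}}(\v{e}_i + \v{e}_j)$, which has norm $1$. Its objective value is $\sqrt{2}\,W_{ij} > 0$. Hence $p^* > 0$.

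\textbf{Step 2: The norm constraint is tight at every optimum.} Suppose $X^*$ is optimal but $\|\mathrm{diag}(X^*)\| < 1$. Let $\alpha = 1/\|\mathrm{diag}(X^*)\| > 1$. Then $\alpha X^*$ is still entry-wise non-negative and PSD (both cones are closed under positive scaling) and $\|\mathrm{diag}(\alpha X^*)\| = 1$, so it is feasible. Its objective is $\alpha p^* > p^*$, contradicting optimality. Therefore $\|\mathrm{diag}(X^*)\| = 1$ for every optimal $X^*$.

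\textbf{Step 3: Convex combinations force equal diagonals.} Suppose for contradiction that $X^{*1}$ and $X^{*2}$ are both optimal with $\mathrm{diag}(X^{*1}) \neq \mathrm{diag}(X^{*2})$. Let $X^{*m} = \tfrac{1}{2}(X^{*1} + X^{*2})$, which is feasible (convex combination) and optimal (by linearity of the objective). By Step 2, the vectors $\v{u} = \mathrm{diag}(X^{*1})$ and $\v{v} = \mathrm{diag}(X^{*2})$ are distinct points on the unit sphere of $\mathbb{R}^n$, both entry-wise non-negative. By strict convexity of the $\ell_2$ norm,
\begin{equation*}
\left\| \mathrm{diag}(X^{*m}) \right\| \;=\; \left\| \tfrac{1}{2}(\v{u} + \v{v}) \right\| \;<\; \tfrac{1}{2}(\|\v{u}\| + \|\v{v}\|) \;=\; 1.
\end{equation*}
But then $X^{*m}$ is an optimal solution violating Step 2, a contradiction. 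Hence $\mathrm{diag}(X^{*1}) = \mathrm{diag}(X^{*2})$, proving uniqueness of $\mathrm{diag}(X^*)$.

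The only subtlety is confirming that $p^* > 0$ (handled in Step 1 using the positive entry of $W$), which is what allows the scaling argument in Step 2 to strictly improve the objective. Once positivity of the optimum is established, everything else is a clean two-line convexity argument. I do not expect any serious obstacle; the main point is simply to notice that $X^*$ itself need not be unique, but its diagonal is, because strict convexity only sees the diagonal vector through the norm constraint.
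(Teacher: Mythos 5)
Your proof is correct, but it takes a genuinely different route from the paper's. The paper argues via SDP duality: it invokes strict feasibility of the primal and dual, extracts KKT multipliers $(Y^*, Z^*, \lambda^*)$ with $W + Y^* + Z^* = \lambda^* \cdot Diag(X^*)$, shows $\lambda^* > 0$ using the positive entry of $W$ (otherwise $Z^* = 0$ and $Y^* = -W$ would have a negative entry), and then for any other optimum $X^{**}$ derives $\langle diag(X^*), diag(X^{**})\rangle \geq 1$ from $\langle W, X^* - X^{**}\rangle = 0$, concluding by Cauchy--Schwarz on two unit vectors. Your argument is purely primal and more elementary: positivity of the optimum (your Step 1, essentially the same certificate the paper uses in the proof of its perturbation theorem) forces the norm constraint to be active at \emph{every} optimum via scaling, and then strict convexity of the Euclidean unit ball rules out two optima with distinct diagonals, since their midpoint would be an optimum with slack in the norm constraint. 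The two proofs meet at the same geometric fact (two non-negative unit vectors whose midpoint has norm $1$, equivalently whose inner product is $1$, must coincide), but you reach it without any duality machinery. Your version also has the virtue of making explicit that tightness holds at every optimum, a point the paper glosses over with a ``without loss of generality'' even though its final Cauchy--Schwarz step needs $\|diag(X^{**})\| = 1$ for the second optimum as well. The only nit: in Step 2 you should note that $p^* > 0$ implies $X^* \neq 0$, and hence $diag(X^*) \neq 0$ because a positive semidefinite matrix with zero diagonal is zero, so that $\alpha = 1/\|diag(X^*)\|$ is well defined.
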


\begin{theorem}\label{diag-robust}
Let $X^*$ and $X'$ be optimal solutions to the SDP (\ref{p-nd}) for weighted adjacency matrices $W$ and $W + \Delta$ respectively. 
If each of $W$ and $W+\Delta$ contains at least one positive entry, then \begin{equation*}
    \norm{diag(X^*) - diag(X')} \leq \dfrac{2\ (2n)^{1/4}\norm{\Delta}_F^{1/2}}{ \langle W, X^*\rangle^{1/2} }.
\end{equation*}
\end{theorem}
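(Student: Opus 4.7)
The plan is to exploit strict convexity of $\norm{\cdot}^2$ on the diagonal via a convex-combination-plus-rescaling argument. First I would observe that since both $W$ and $W+\Delta$ contain a positive entry, each SDP has a strictly positive optimal value: if $W_{ij} > 0$, the feasible point $(\v{e}_i+\v{e}_j)(\v{e}_i+\v{e}_j)^T/\sqrt{2}$ witnesses a positive objective. Consequently, any optimal solution must saturate the diagonal-norm constraint, i.e., $\norm{diag(X^*)} = \norm{diag(X')} = 1$; otherwise, scaling the solution up would strictly improve the objective. By Theorem \ref{diag-uniq} these diagonals are themselves uniquely determined by $W$ and $W+\Delta$, so the claimed inequality is well-posed regardless of which optimizers are chosen.

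Next, let $\v{v} := diag(X^*)$, $\v{v}' := diag(X')$, and set $\delta := \norm{\v{v}-\v{v}'}^2$. The midpoint $X_{1/2} := (X^* + X')/2$ is non-negative and PSD, and using $\norm{\v{v}} = \norm{\v{v}'}=1$ a direct computation gives $\norm{diag(X_{1/2})}^2 = 1 - \delta/4$. Hence $\tilde X := X_{1/2}/\sqrt{1-\delta/4}$ is feasible for both SDPs. Testing optimality of $X^*$ against $\tilde X$ under $W$ yields $(2\sqrt{1-\delta/4} - 1)\langle W, X^* \rangle \geq \langle W, X'\rangle$, and the analogous comparison for $X'$ under $W+\Delta$ gives $(2\sqrt{1-\delta/4} - 1)\langle W+\Delta, X'\rangle \geq \langle W+\Delta, X^*\rangle$. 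Adding these two inequalities and cancelling the shared $\langle W, X^* + X'\rangle$ terms produces the central inequality
\begin{equation*}
2\bigl(\langle W, X^*\rangle + \langle W+\Delta, X'\rangle\bigr)\bigl(1 - \sqrt{1-\delta/4}\bigr) \leq \langle \Delta, X' - X^*\rangle.
\end{equation*}

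Finally, I would bound the two sides separately. The right-hand side is at most $\norm{\Delta}_F \norm{X'-X^*}_F \leq \sqrt{2n}\,\norm{\Delta}_F$, using for any feasible $X$ the chain $\norm{X}_F \leq trace(X) \leq \sqrt{n}\norm{diag(X)} \leq \sqrt{n}$ together with $\langle X^*, X'\rangle \geq 0$ (both are PSD). For the left-hand side, the elementary inequality $1-\sqrt{1-x} \geq x/2$ on $[0,1]$ applied with $x=\delta/4$ gives $1-\sqrt{1-\delta/4} \geq \delta/8$, and $\langle W+\Delta, X'\rangle \geq 0$ (attained at $X=0$) lets us drop that term from the sum. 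These combine to $\delta \leq 4\sqrt{2n}\,\norm{\Delta}_F/\langle W, X^*\rangle$, and taking square roots gives the claimed bound.

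The main obstacle is the midpoint-and-rescale construction in the second paragraph: mere convex combinations are insufficient because the feasible set is not strictly convex in the directions that matter, and one has to pair convexity of the diagonal norm with the rescaling factor $1/\sqrt{1-\delta/4}$ in order to extract a $\delta/8$-order (rather than merely linear) suboptimality gap out of the displacement $\v{v}-\v{v}'$. Everything else is a routine combination of Cauchy--Schwarz and the standard PSD bound $\norm{X}_F \leq trace(X)$.
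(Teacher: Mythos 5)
Your argument is correct and follows essentially the same route as the paper's proof: positivity of the optimal value forces $\norm{diag(X^*)}=\norm{diag(X')}=1$, a rescaled sum of $X^*$ and $X'$ serves as the feasible comparison point (your $\tilde X$ coincides with the paper's $X''=(X^*+X')/\norm{diag(X^*)+diag(X')}$, since $\norm{diag(X^*)+diag(X')}=2\sqrt{1-\delta/4}$), and Cauchy--Schwarz on the $\Delta$ terms closes the bound. Your bookkeeping is in fact marginally tighter --- bounding $\langle\Delta,X'-X^*\rangle\le\sqrt{2n}\,\norm{\Delta}_F$ in one shot via $\langle X^*,X'\rangle\ge 0$, rather than the two terms separately by $\sqrt{n}\,\norm{\Delta}_F$ each --- which recovers the stated constant $2(2n)^{1/4}$ exactly, whereas the paper's displayed chain ends at $2\sqrt{2}\,n^{1/4}=2(4n)^{1/4}$ (and your rank-one witness $(\v{e}_i+\v{e}_j)(\v{e}_i+\v{e}_j)^T/\sqrt{2}$ quietly repairs the paper's non-PSD choice $(\v{e}_i\v{e}_{i'}^T+\v{e}_{i'}\v{e}_i^T)/\sqrt{2}$).
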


While Theorem \ref{diag-robust} shows the robustness of the diagonal of an optimal solution of (\ref{p-nd}) to only perturbations of the weighted adjacency matrix, it is, in fact, observed using computational experiments that all entries of an optimal solution are robust to the presence of fringe nodes whose feature vectors have a relatively smaller largest entry, i.e. fringe nodes that are close to being stray nodes. In particular, the entries of an optimal solution corresponding to the cluster subgraphs comprised of strong nodes and fringe nodes close to the strong set are larger compared to, and therefore well-separated from, the rest of the entries. This is supported by computational results shown in Table \ref{nd-perf} which show the performance of \anormd. Each row in the table corresponds to $10$ graph instances generated using the NFM in which the simplex distribution is chosen to be the Dirichlet distribution. We fix $k = 3$ and the Dirichlet parameter $\bm{\alpha} = 0.3 \v{e}$. The first column denotes the size of graph, i.e. number of nodes $n$. The second column counts the \emph{\anormd\ success}, i.e. the number of instances, out of $10$, for which the number of recovered clusters is equal to the true number of clusters $k$ such that the recovered clusters are disjoint and each recovered cluster contains all strong nodes (and possibly some fringe nodes) from exactly one ground-truth cluster. 

\begin{table}
  \caption{Performance of \anormd.}
  \label{nd-perf} 
  \centering
  \begin{tabular}{ccc}
    \toprule
    Graph size (number of nodes) & \anormd\ success $(/10)$ \\
    \midrule
    $60$ & $9$ \\
    $70$ & $8$  \\
    $80$ & $10$  \\
    $90$ & $9$  \\
    $100$ & $9$  \\
    $110$ & $10$  \\
    $120$ & $9$  \\
    $130$ & $10$  \\
    $140$ & $10$  \\
    \bottomrule
  \end{tabular}
\end{table}

The theoretical and computational results regarding the performance of \anormd\ presented in this section lead us to make the following conjecture.

\begin{conj}
Let $G$ be a graph generated according to NFM in which the simplex distribution is chosen to be the Dirichlet distribution with constant parameter. Then with probability not converging to $0$ as $n \to \infty$, \anormd\ returns exactly $k$ disjoint clusters $V_1', \dots, V_k'$ such that, for each $j \in [k]$
\begin{equation*}
    V_j^{strong} \subseteq V_j'.
\end{equation*}
\end{conj}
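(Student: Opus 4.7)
The plan is to follow the two-step perturbation framework laid out in Section~\ref{warmup}, combining the exact recovery result of Theorem~\ref{nd-rec} with the diagonal robustness results (Theorems~\ref{diag-uniq} and~\ref{diag-robust}) and invoking the preceding conjectures on the spectral structure of the NFM Dirichlet model. First, I would fix the threshold $t(k,\bm{\alpha}) \in (0.5,1/\sqrt{2})$ from the earlier conjectures and define, for each $j \in [k]$, the enlarged set $V_j' := \{i \in [n]: \theta^i_j \geq t(k,\bm{\alpha})\} \supseteq V_j^{strong}$. Let $V' := \bigcup_j V_j'$ and let $G' = G[V']$ denote the induced subgraph. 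The role of $V'$ is that of the "clean" subgraph from step~1 of the framework; the complement $V \setminus V'$ consists of near-stray fringe nodes plus true stray nodes, to be treated as noise in step~2.

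For step~1, I would show that with probability bounded away from zero as $n \to \infty$, the submatrices $W(V_j', V_j')$ simultaneously satisfy Assumption~\ref{nd-as}. The preceding spectral conjecture supplies the rank-one-plus-noise decomposition (\ref{w-approx2}) with a positive top eigenvector $\v{q}_j$, and the computational evidence in Tables~\ref{nfm-w-struct1}--\ref{nd-as-expt} supports that conditions~\ref{v}--\ref{d} hold with constant probability in the Dirichlet regime. Combined via a union bound over the $k$ clusters (constant), this gives a non-vanishing probability for simultaneous satisfaction. Theorem~\ref{nd-rec} applied to $G'$ then produces an optimal solution $\tilde X^*$ of (\ref{p-nd}) restricted to $V'$ whose support matches exactly the desired cluster partition $\{V_1', \dots, V_k'\}$, and by strict positivity of the within-cluster entries there is a strictly positive gap $\delta$ between the smallest within-cluster entry and the largest (zero) between-cluster entry.

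For step~2, I would re-introduce the nodes in $V \setminus V'$. Writing the full weight matrix as $W = \tilde W + \Delta$, where $\tilde W$ is $W$ restricted to $V'$ and padded with zeros, the perturbation $\Delta$ collects all edges incident to $V \setminus V'$. Because stray nodes form only negative edges and near-stray fringe nodes form predominantly negative edges, these contributions tend to \emph{reduce} the objective rather than create spurious positive blocks. Using Theorem~\ref{diag-robust} I would bound $\|\operatorname{diag}(X^*) - \operatorname{diag}(\tilde X^*_{\text{pad}})\|$ in terms of $\|\Delta\|_F / \langle W, X^*\rangle^{1/2}$, and with a concentration argument on $\|\Delta\|_F$ (a sum of $O(n \cdot |V\setminus V'|)$ terms each bounded in magnitude by the maximum log-odds, which is $O(\log n)$ under Dirichlet sampling) argue that this perturbation is smaller than the gap $\delta/2$. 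An analogous perturbation argument on the full matrix (not just diagonal), e.g., via operator-norm sensitivity of the SDP combined with the dual certificate constructed implicitly in the proof of Theorem~\ref{nd-rec}, would then show all off-diagonal entries are perturbed by less than $\delta/2$, so the adaptive rounding of Algorithm~\ref{alg-nd} encounters a valid threshold $t$ at which the thresholded matrix restricted to $V'$ is exactly the cluster matrix of $\{V_1', \dots, V_k'\}$, delivering the desired inclusion $V_j^{strong} \subseteq V_j'$.

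The main obstacle will be step~2, specifically promoting the diagonal robustness of Theorem~\ref{diag-robust} to an off-diagonal robustness statement strong enough to certify that the adaptive thresholding picks out exactly the right partition rather than, e.g., merging clusters across near-stray bridges or isolating strong nodes. The current paper only controls $\operatorname{diag}(X^*)$, and extending this to full entry-wise control likely requires constructing an explicit dual certificate for (\ref{p-nd}) under Assumption~\ref{nd-as} and then perturbing it, which the paper itself notes requires Brouwer-type arguments rather than explicit primal-dual construction. A secondary obstacle is tightening the probability analysis on the Dirichlet model: while the constant-probability conjectures on Assumption~\ref{nd-as} rest on empirical evidence, a rigorous version would need concentration of $\Theta_j \Theta_j^T$ around its expectation along with careful control of the linearization error in (\ref{w-approx}) over the fringe region.
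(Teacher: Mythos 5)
This statement is stated in the paper as a \emph{conjecture}: the paper offers no proof of it, only partial supporting theory (Theorem~\ref{nd-rec} under Assumption~\ref{nd-as}, Theorems~\ref{diag-uniq} and~\ref{diag-robust}) together with the computational evidence in Tables~\ref{nfm-w-struct1}--\ref{nd-perf}. Your sketch follows exactly the roadmap the paper itself lays out (the two-step perturbation framework of Section~\ref{warmup} applied to \anormd), so there is no divergence of approach to report; the question is whether your sketch closes the gaps the paper leaves open, and it does not.

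The gaps are genuine and worth naming precisely. First, step~1 of your argument is conditional on the unproven probabilistic conjectures that the Dirichlet-generated submatrices $W(V_j',V_j')$ satisfy Assumption~\ref{nd-as} with non-vanishing probability; the paper's own Table~\ref{nd-as-expt} shows condition~\ref{n} \emph{failing} by a constant factor in most instances, so even the empirical support for simultaneous satisfaction of all four conditions is weak, and a rigorous version would need concentration of $\Theta_j\Theta_j^T$ plus control of the logit linearization error, none of which you supply. Second, and more fundamentally, your step~2 needs entrywise (off-diagonal) stability of the optimal solution of (\ref{p-nd}) under the perturbation $\Delta$, but Theorem~\ref{diag-robust} controls only $diag(X^*)$, and Theorem~\ref{diag-uniq} shows only that the \emph{diagonal} of an optimizer is unique --- so even before perturbing, the solver may return an optimal solution different from the block-structured $X^*$ constructed in Theorem~\ref{nd-rec}, and the adaptive rounding could then fail. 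You correctly flag this as the main obstacle, but flagging it is not overcoming it; the Brouwer-based construction in the proof of Theorem~\ref{nd-rec} does not yield the explicit, perturbable dual certificate your argument requires. Third, there is a quantitative problem you do not address: the paper's own analysis (Example~\ref{2ep-weakness}) shows the discarded fringe set $V\setminus V'$ has size $\Theta(n)$, so $\norm{\Delta}_F$ grows polynomially in $n$, and it is not clear that the bound of Theorem~\ref{diag-robust}, $2(2n)^{1/4}\norm{\Delta}_F^{1/2}/\langle W,X^*\rangle^{1/2}$, even tends to zero in this regime, let alone beats a gap $\delta$ that itself shrinks with $n$ (the within-cluster entries of $X^*=\v{r}_j\v{r}_j^T/\lambda^*$ scale like $1/n$ after the normalization $\norm{diag(X^*)}=1$). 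Your tail bound claim that the maximum log-odds weight is $O(\log n)$ under Dirichlet sampling is also asserted without justification, since the logit is unbounded near the simplex boundary. In short, the proposal is a reasonable restatement of the intended proof strategy, but it does not constitute a proof of the conjecture.
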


\subsection{Proofs}
In this section we build a proof of Theorem \ref{nd-rec}. Our strategy is to demonstrate the desired structure in each of the submatrices of an optimal solution corresponding to a cluster. One key ingredient for this approach is to determine a point $\v{x}$ such that
\begin{equation*}
(W\v{x})^{\circ 1/3} = \v{x}.    
\end{equation*}
However, note that the exact solution to the system
\begin{equation*}
(\v{q}\v{q}^T \v{x})^{\circ 1/3} = \v{q}    
\end{equation*}
can be shown to be $\v{x} = (\beta \v{q})^{\circ 1/3}$ where $\beta = [\v{q}^T (\v{q}^{\circ 1/3})]^{3/2}$. Moreover, due to Assumption \ref{nd-as}, $W$ can be interpreted as a perturbation of the matrix $\v{q}\v{q}^T$ by matrices $D$ and $N$ thereby motivating the following lemma. 

\begin{lemma}\label{u}
Let $G=(V, W)$ be a graph on $n$ nodes satisfying conditions \ref{v}, \ref{n}, \ref{d} in Assumption \ref{nd-rec}. Then the continuous function $f: \mathbb{R}^n \rightarrow \mathbb{R}^n$ defined as $f(\v{x}) := (W\v{x})^{\circ 1/3}$ maps the set 
\begin{equation*}
    S := \left[\dfrac{1}{2}(\beta \v{q})^{\circ 1/3} , \dfrac{3}{2}(\beta \v{q})^{\circ 1/3} \right], \text{ where } \beta = [\v{q}^T (\v{q}^{\circ 1/3})]^{3/2},
\end{equation*}
to itself. 
\end{lemma}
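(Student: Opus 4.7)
The plan is to verify directly, entry by entry, that if $\v{x} \in S$ then $f(\v{x})_i \in [\tfrac{1}{2}(\beta q_i)^{1/3}, \tfrac{3}{2}(\beta q_i)^{1/3}]$ for every $i$, and to note that continuity of $f$ follows from continuity of $\v{x}\mapsto W\v{x}$ and of the real cube root. The first step is a small algebraic identity that ties $\beta$ to the relevant norms of $\v{q}$: writing $c:=\v{q}^T\v{q}^{\circ 1/3} = \sum_i q_i^{4/3} = \|\v{q}^{\circ 2/3}\|^2$, we have $\beta = c^{3/2}$, so $\beta^{2/3} = \|\v{q}^{\circ 2/3}\|^2$. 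This identity is what makes the numerical slack in conditions \ref{v} and \ref{n} translate cleanly into bounds in terms of $\beta$.

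Next, I would use the decomposition $W = \v{q}\v{q}^T - D + N$ from condition \ref{d} to write
\begin{equation*}
(W\v{x})_i \;=\; q_i\,(\v{q}^T \v{x}) \;-\; q_i^2 x_i \;+\; (N\v{x})_i,
\end{equation*}
and bound the three summands. Taking inner product of the entrywise bounds $\tfrac{1}{2}(\beta q_j)^{1/3}\le x_j\le\tfrac{3}{2}(\beta q_j)^{1/3}$ with the positive vector $\v{q}$ yields $\tfrac{1}{2}\beta \le \v{q}^T\v{x}\le \tfrac{3}{2}\beta$ using the identity $\v{q}^T\v{q}^{\circ 1/3}=\beta^{2/3}$, hence $\tfrac{1}{2}\beta q_i\le q_i(\v{q}^T\v{x})\le \tfrac{3}{2}\beta q_i$. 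For the middle term, the bound $x_i\le\tfrac{3}{2}(\beta q_i)^{1/3}$ combined with condition \ref{v} (rewritten as $q_i^{4/3}\le \beta^{2/3}/45$) gives $0 \le q_i^2 x_i \le \tfrac{3}{2}\beta^{1/3}q_i^{7/3}\le \beta q_i/30$. For the noise term, Cauchy–Schwarz gives $|(N\v{x})_i|\le\|\v{n}^i\|\|\v{x}\|$; the bound $\|\v{x}\|\le \tfrac{3}{2}\beta^{1/3}\|\v{q}^{\circ 1/3}\|$ (from squaring and summing the entrywise bounds) together with condition \ref{n} cancels the $\|\v{q}^{\circ 1/3}\|$ factors and leaves $|(N\v{x})_i|\le \beta q_i/30$.

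Assembling these three estimates gives the upper bound
\begin{equation*}
(W\v{x})_i \;\le\; \tfrac{3}{2}\beta q_i + \tfrac{1}{30}\beta q_i \;=\; \tfrac{23}{15}\beta q_i \;\le\; \tfrac{27}{8}\beta q_i
\end{equation*}
and the lower bound
\begin{equation*}
(W\v{x})_i \;\ge\; \tfrac{1}{2}\beta q_i - \tfrac{1}{30}\beta q_i - \tfrac{1}{30}\beta q_i \;=\; \tfrac{13}{30}\beta q_i \;\ge\; \tfrac{1}{8}\beta q_i \;>\; 0,
\end{equation*}
and since the real cube root is monotone, taking cube roots gives $\tfrac{1}{2}(\beta q_i)^{1/3}\le f(\v{x})_i\le \tfrac{3}{2}(\beta q_i)^{1/3}$, as required. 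I do not expect any deep obstacle; the only subtlety is that the factor $45$ in conditions \ref{v} and \ref{n} must be tight enough that, after cubing, the sums $\tfrac{23}{15}$ and $\tfrac{13}{30}$ still fit inside the interval $[\tfrac{1}{8},\tfrac{27}{8}]$ obtained by cubing $[\tfrac{1}{2},\tfrac{3}{2}]$, which it does comfortably.
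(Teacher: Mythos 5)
Your proposal is correct and follows essentially the same route as the paper's proof: the same decomposition $W\v{x}=\v{q}\v{q}^T\v{x}-D\v{x}+N\v{x}$, the same use of the identity $\v{q}^T\v{q}^{\circ 1/3}=\beta^{2/3}=\norm{\v{q}^{\circ 2/3}}^2$ to convert conditions \ref{v} and \ref{n} into the bounds $\beta q_i/30$ on the correction and noise terms, and the same final constants $13/30$ and $23/15$ compared against $1/8$ and $27/8$. The only cosmetic difference is that you argue entrywise where the paper writes vector inequalities.
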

\begin{proof}
We will show that for any $\v{x} \in S$, each of $f(\v{x}) \geq (\beta \v{q})^{\circ 1/3}/2$ and $f(\v{x}) \leq 3(\beta \v{q})^{\circ 1/3}/2$ holds separately. Pick any $\v{x} \in S$. We have
\begin{equation}\label{main-lb}
    \begin{aligned}
    (W\v{x})^{\circ 1/3} &= (\v{q}\v{q}^T\v{x} - D\v{x} + N\v{x})^{\circ 1/3} \\
    &\geq \left[\dfrac{\beta^{1/3}\v{q}\v{q}^T(\v{q}^{\circ 1/3})}{2} - D\v{x} + N\v{x}\right]^{\circ 1/3} && (\text{using the lower bound on $\v{x}$}) \\
    &= \left[\dfrac{\beta\v{q}}{2} - D\v{x} + N\v{x}\right]^{\circ 1/3} && (\text{using the definition of $\beta$}) \\
    &\geq \left[\dfrac{\beta\v{q}}{2} - \dfrac{3\beta^{1/3}D(\v{q}^{\circ 1/3})}{2} + N\v{x}\right]^{\circ 1/3} && (\text{using the upper bound on $\v{x}$}) \\
    &\geq \left[\dfrac{\beta\v{q}}{2} - \dfrac{3\beta^{1/3}\v{q}^{\circ 7/3}}{2} + N\v{x}\right]^{\circ 1/3}. && (\text{using the definition of $D$}) 
    \end{aligned}
\end{equation}

Now we bound each of the second and third terms above separately. Condition \ref{v} in Assumption \ref{nd-as} implies for each $i \in [n]$,
\begin{align*}
    q_i^{7/3} &\leq \dfrac{\norm{\v{q}^{\circ 2/3}}^2 q_i}{45} \\
    &= \dfrac{\left(\sum\limits_{i \in [n]}q_i^{4/3}\right) q_i}{45} \\
    &= \dfrac{[\v{q}^T (\v{q}^{\circ 1/3})] q_i}{45} \\
    &= \dfrac{\beta^{2/3} q_i}{45}. && (\text{using the definition of $\beta$})
\end{align*}
The above chain implies that 
\begin{equation}\label{second-term}
    \begin{aligned}
    \v{q}^{\circ 7/3} &\leq \dfrac{\beta^{2/3}\v{q}}{45} \\
    \iff \dfrac{3\beta^{1/3}\v{q}^{\circ 7/3}}{2} &\leq \dfrac{\beta \v{q}}{30}. && (\text{multiplying both sides by $3\beta^{1/3}/2$})
    \end{aligned}
\end{equation}
    
Moreover, for each $i \in [n]$, we have
\begin{equation}\label{third-term}
    \begin{aligned}
    \lvert (N\v{x})_i\rvert &= \lvert \v{n}^i \v{x} \rvert \\
    &\leq \norm{ \v{n}^i } \norm{ \v{x}} && (\text{using Cauchy-Schwarz inequality}) \\
    &\leq \dfrac{q_i \norm{\v{q}^{\circ 2/3}}^2}{45 \norm{\v{q}^{\circ 1/3}} } \norm{\v{x}} && (\text{using condition \ref{n} in Assumption \ref{nd-as}}) \\
    &\leq \dfrac{q_i \norm{\v{q}^{\circ 2/3}}^2 \beta^{1/3} }{30} && (\text{using the upper bound on $\v{x}$}) \\
    &= \dfrac{\beta q_i}{30}. && (\text{using the definition of $\beta$})
    \end{aligned}
\end{equation}
Then using (\ref{second-term}) and (\ref{third-term}) in (\ref{main-lb}), we get
\begin{equation}\label{lb}
    \begin{aligned}
    (W\v{x})^{\circ 1/3} &\geq \left(\dfrac{\beta \v{q}}{2} - \dfrac{\beta \v{q}}{15} \right)^{\circ 1/3} \\
    & = \left(\dfrac{13}{30} \right)^{1/3} (\beta \v{q})^{\circ 1/3} \\
    &> \dfrac{1}{2} (\beta \v{q})^{\circ 1/3}.
    \end{aligned}
\end{equation}

For any $\v{x} \in S$, we also have
\begin{equation}\label{ub}
    \begin{aligned}
    (W\v{x})^{\circ 1/3} &= (\v{q}\v{q}^T\v{x} - D\v{x} + N\v{x})^{\circ 1/3} \\
    &\leq \left[\dfrac{3\beta^{1/3}\v{q}\v{q}^T(\v{q}^{\circ 1/3})}{2} - D\v{x} + N\v{x}\right]^{\circ 1/3} && (\text{using the upper bound on $\v{x}$}) \\
    &= \left[\dfrac{3\beta\v{q}}{2} - D\v{x} + N\v{x}\right]^{1/3} && (\text{using the definition of $\beta$}) \\
    &\leq \left[\dfrac{3\beta\v{q}}{2} + N\v{x}\right]^{\circ 1/3} && (\because D\v{x} > 0) \\
    &\leq \left(\dfrac{23}{15}\right)^{1/3} (\beta \v{q})^{\circ 1/3} && (\text{using (\ref{third-term})}) \\
    &< \dfrac{3}{2} (\beta \v{q})^{\circ 1/3}.
    \end{aligned}
\end{equation}
Combining (\ref{lb}) and (\ref{ub}), we conclude that the function $f$ maps $S$ to itself. 
\end{proof}

To proceed with the proof of Theorem \ref{nd-rec}, in addition to Lemma \ref{u}, we also make use of the following Brouwer fixed-point theorem.

\begin{theorem}[Brouwer Fixed-Point Theorem \cite{brouwer1912}]\label{brouwer-fp}
Let $C \subseteq \mathbb{R}^n$ be a non-empty convex compact set and let $f: C \rightarrow C$ be a continuous function. Then there exists a point $\v{x} \in C$ such that $f(\v{x}) = \v{x}$.
\end{theorem}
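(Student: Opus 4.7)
The plan is to prove the theorem in two stages: a topological reduction to the standard simplex, followed by a combinatorial argument via Sperner's lemma. For the first stage, I would use the classical fact that any non-empty convex compact subset $C \subseteq \mathbb{R}^n$ is homeomorphic to a closed Euclidean ball of some dimension $d \leq n$, which is in turn homeomorphic to the standard $d$-simplex $\Delta^d$ (one can produce such a homeomorphism by restricting to the affine hull of $C$ and then using a radial map from an interior point). Fix such a homeomorphism $h: \Delta^d \to C$; then $g := h^{-1} \circ f \circ h$ is a continuous self-map of $\Delta^d$, and any fixed point of $g$ yields, via $h$, a fixed point of $f$. Hence it suffices to prove the theorem on $\Delta^d$.

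For the combinatorial core, I would prove Sperner's lemma: for any simplicial subdivision of $\Delta^d$ together with a labeling of its vertices by $\{0, 1, \ldots, d\}$ such that each vertex lying on a proper face of $\Delta^d$ receives a label of one of the simplex vertices spanning that face, there exists at least one small $d$-simplex in the subdivision whose $d+1$ vertices carry all $d+1$ distinct labels. This is established by induction on $d$: the base case $d=0$ is trivial, and the inductive step is a double-counting argument that tracks $(d-1)$-faces whose vertices carry the labels $\{0, 1, \ldots, d-1\}$, using the inductive hypothesis on the boundary face opposite vertex $d$.

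With Sperner's lemma in hand, I would apply it to $g$ as follows. For each positive integer $m$, take the $m$-fold barycentric subdivision of $\Delta^d$, whose mesh tends to zero as $m \to \infty$. Label each vertex $\v{v} = (v_0, \ldots, v_d)$ by any index $i$ satisfying $v_i > 0$ and $g(\v{v})_i \leq v_i$. Such an index exists because $\sum_i g(\v{v})_i = 1 = \sum_i v_i$ prevents $g$ from strictly increasing every strictly positive coordinate; moreover the boundary condition of Sperner is met, since $v_i = 0$ forbids the label $i$ and hence vertices on a proper face only receive labels of vertices spanning that face. Sperner's lemma then produces a fully-labeled small simplex $T_m$ for each $m$. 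By compactness of $\Delta^d$, after passing to a subsequence the vertices of $T_m$ all converge to a common point $\v{x}^* \in \Delta^d$ (their mutual distances go to zero with the mesh). For each coordinate $i$, selecting the vertex of $T_m$ with label $i$ and passing to the limit with the continuity of $g$ gives $g(\v{x}^*)_i \leq x^*_i$. Since $\sum_i g(\v{x}^*)_i = \sum_i x^*_i = 1$, these $d+1$ inequalities must all be equalities, so $g(\v{x}^*) = \v{x}^*$.

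The principal obstacle is the inductive proof of Sperner's lemma itself, which is purely combinatorial but delicate: the double-counting requires correctly identifying which $(d-1)$-faces of small simplices carry the label set $\{0, \ldots, d-1\}$ and showing the parity of their count forces a $d$-dimensional fully-labeled simplex. A secondary technicality is the reduction step, where one must handle the possibility that $C$ is lower-dimensional in $\mathbb{R}^n$ by restricting attention to its affine hull; once that is done, the existence of a homeomorphism onto a simplex is standard. The remaining elements, namely the limit argument and the identification of the fixed point from the Sperner labeling, are essentially forced once the labeling is set up so that labels encode the direction of weak coordinate decrease under $g$.
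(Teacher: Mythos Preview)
Your proposal is a correct and classical route to Brouwer's theorem via Sperner's lemma, but note that the paper does not actually prove this statement: it is quoted as a known result with a citation to \cite{brouwer1912} and is used as a black box in the proof of Theorem~\ref{nd-rec}. So there is nothing in the paper's own argument to compare against; you have supplied a full proof where the paper supplies none.

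That said, your outline is sound. The reduction to the standard simplex by passing to the affine hull and using a radial homeomorphism is the right way to handle possibly lower-dimensional $C$; the Sperner labeling you describe (label $\v{v}$ by some $i$ with $v_i>0$ and $g(\v{v})_i\le v_i$) is exactly the standard one, and the parity/double-counting induction for Sperner is the usual delicate but routine step. The compactness-and-continuity limit argument converting a sequence of fully-labeled simplices into a fixed point is correct as stated. For the purposes of this paper, however, all of this is unnecessary: the theorem is invoked, not proved.
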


\begin{proof}[Proof of Theorem \ref{nd-rec}]
For each $j \in [k]$, let $\v{q}_j, D_j, N_j, U_j$ and $S_j$ denote the quantities mentioned in Assumption \ref{nd-as}, and define $n_j$ as the cardinality of $V_j$ (i.e. the size of cluster $j$). Our analysis uses SDP duality and therefore note that the dual of (\ref{p-nd}) is
\begin{equation}
\label{d-nd}
\tag{D-ND}
\begin{aligned}
& \min_{(X, Y, Z, \lambda)} && \lambda \norm{diag(X)}^2 + \lambda \\
& \: \mathrm{s.t.} && Y \geq 0 \\
&&& Z \succeq 0 \\
&&& \lambda \geq 0 \\
&&& W+Y+Z = \lambda \cdot Diag(X).
\end{aligned}
\end{equation}
Both optimization problems (\ref{p-nd}) and (\ref{d-nd}) have strictly feasible solutions. For instance, $X' := (0.5I + 0.5E)/n$ is a positive, positive definite matrix which is feasible for (\ref{p-nd}). Similarly, $X' := I, Y' := E$, $Z' := (\norm{W+E} + \epsilon) I - (W+E) $ and $\lambda' := (\norm{W+E} + \epsilon)$ gives a strictly feasible solution $(X', Y', Z', \lambda')$ for (\ref{d-nd}) for any $\epsilon > 0$. Therefore using the Karush-Kuhn-Tucker (KKT) conditions for optimality, $X^*$ is an optimal solution for (\ref{p-nd}) if and only if $X^*$ is feasible for (\ref{p-nd}) and there exist a non-negative matrix $Y^* \in \mathbb{S}^n$, a positive semidefinite matrix $Z^*$, and a non-negative scalar $\lambda^*$ such that:
\begin{itemize}
    \item $X^*_{ij} Y^*_{ij} = 0, \forall i, j \in [n]$ 
    \item $\langle X^*, Z^*\rangle = 0$
    \item $\lambda^* \cdot (\norm{diag(X^*)}-1) = 0$
    \item $W+Y^*+Z^* = \lambda^* \cdot Diag(X^*)$
\end{itemize}
In the remainder of the proof, we will explicitly construct all the above mentioned quantities. Now for each $j \in [k]$, since $W(V_j, V_j)$ satisfies Assumption \ref{nd-as}, using Lemma \ref{u} and Theorem \ref{brouwer-fp}, we conclude that there exists an $n_j$-dimensional vector $\v{r}_j \in \left[\dfrac{1}{2}(\beta_j \v{q}_j)^{\circ 1/3}, \dfrac{3}{2}(\beta_j \v{q}_j)^{\circ 1/3} \right]$, where $\beta_j = [\v{q}_j^T (\v{q}_j^{\circ 1/3})]^{3/2}$, such that
\begin{equation}\label{fp-eq}
    W(V_j, V_j)\cdot \v{r}_j = \v{r}_j^{\circ 3}.
\end{equation}

We set
\begin{equation*}
\lambda^* = \sqrt{\sum\limits_{j \in [k]} \norm{\v{r}_j \circ \v{r}_j}^2}.
\end{equation*}

For each $j \in [k]$, we set
\begin{align*}
X^*(V_j, V_j) &=\v{r}_j \v{r}_j^T/\lambda^* \\ 
Y^*(V_j, V_j) &= 0 \\
Z^*(V_j, V_j) &= Diag(\v{r}_j \circ \v{r}_j) - W(V_j, V_j).
\end{align*}

For each distinct $j, j' \in [k]$, we set
\begin{align*}
X^*(V_j, V_{j'}) &= 0 \\
Y^*(V_j, V_{j'}) &= -W(V_j, V_{j'}) \\
Z^*(V_j, V_j) &= 0.
\end{align*}

For each stray node $v$, we set
\begin{align*}
X^*(v, :) &= 0 && \text{(and  $X^*(:, v) = 0$)} \\
Y^*(v, :) &= -W(v, :) && \text{(and  $Y^*(:, v) = -W(:, v)$)} \\
Z^*(v, :) &= 0. && \text{(and  $Z^*(:, v) = 0$)}
\end{align*}

First we show that the constructed $X^*$ is feasible for (\ref{p-nd}). Note that there exists a permutation of the rows (and columns) of $X^*$ which yields a block diagonal matrix in which the non-zero blocks are given by the rank-one positive semidefinite matrices $\v{r}_1\v{r}_1^T/\lambda^*, \dots, \v{r}_k\v{r}_k^T/\lambda^*$. This shows that $X^*$ is positive semidefinite. Morever, since vectors $\v{r}_1, \dots, \v{r}_k$ are positive, we conclude that $X^*$ is non-negative. We also have
\begin{align*}
\norm{diag(X^*)}^2 &= \dfrac{\sum\limits_{j \in [k]} \norm{\v{r}_j \circ \v{r}_j}^2}{{\lambda^*}^2} \\
&= 1. && (\text{using the definition of $\lambda^*$})
\end{align*}
Therefore $X^*$ is feasible for (\ref{p-nd}). Note that this also implies that $\lambda^*(\norm{diag(X^*)}-1) = 0$.

Now we show that the constructed $Y^*, Z^*, \lambda^*$ satisfy the remaining desired properties. Because each pair of nodes lying in distinct clusters shares a negative edge and because each stray node shares a negative edge with every other node in the graph, we have that $Y^* \geq 0$. Also note that $\lambda^*$ is a positive scalar by construction. 

Matrices $X^*$ and $Y^*$ have disjoint supports by construction, and therefore $X^*_{ij}Y^*_{ij} = 0$ for each $i, j\in [n]$. Moreover, for each $j \in [k]$, using (\ref{fp-eq}), we have
\begin{equation}\label{xz}
    \begin{aligned}
    & W(V_j, V_j) \cdot \v{r}_j = \v{r}_j^{\circ 3} \\
    \iff & W(V_j, V_j) \cdot \v{r}_j = Diag(\v{r}_j \circ \v{r}_j) \cdot \v{r}_j \\
    \iff & Z^*(V_j, V_j) \cdot \v{r}_j =  0  && (\text{using the definition of $Z^*$}) \\
    \iff & Z^*(V_j, V_j) \cdot \v{r}_j \v{r}_j^T/ \lambda^* =  0  && (\because \v{r}_j > 0, \lambda^* > 0) \\
    \iff & Z^*(V_j, V_j) \cdot X^*(V_j, V_j) =  0  && (\text{using the definition of $X^*$}) \\
    \iff & \langle Z^*(V_j, V_j) , X^*(V_j, V_j) \rangle =  0  && (\because X^*, Z^* \succeq 0) 
    \end{aligned}
\end{equation}
Therefore we have
\begin{align*}
    \langle X^*, Z^* \rangle &= \sum\limits_{j \in [k]} \langle X^*(V_j, V_j), Z^*(V_j, V_j) \rangle && (\text{using the definitions of $X^*, Z^*$}) \\
    &= 0. && (\text{using (\ref{xz})})
\end{align*}

Note that there exists a permutation of the rows (and columns) of $Z^*$ which yields a block diagonal matrix in which the non-zero blocks are given by the matrices $Diag(\v{r}_1 \circ \v{r}_1) - W(V_1, V_1), \dots, Diag(\v{r}_k \circ \v{r}_k) - W(V_k, V_k)$. Therefore to show the positive semidefiniteness of $Z^*$, it suffices to show that for each $j \in [k]$, the matrix $Z^*(V_j, V_j) = Diag(\v{r}_j \circ \v{r}_j) - W(V_j, V_j)$ is positive semidefinite. From (\ref{xz}), we know that $Z^*(V_j, V_j) \cdot \v{r}_j = 0$. This implies that $\v{e}$ belongs to the null space of $Diag(\v{r}_j) \cdot Z^*(V_j, V_j) \cdot Diag(\v{r}_j)$. Consequently, we observe that
\begin{equation*}
 \bar{L}_j := Diag(\v{r}_j) \cdot Z^*(V_j, V_j) \cdot Diag(\v{r}_j)   
\end{equation*}
is the Laplacian matrix of a graph, called $\bar{G}_j$, on $n_j$ nodes whose weighted adjacency matrix is
\begin{equation*}
\bar{W}_j := Diag(\v{r}_j) \cdot W(V_j, V_j) \cdot Diag(\v{r}_j).
\end{equation*}
Moreover, $Z^*$ is positive semidefinite if and only if the Laplacian $\bar{L}_j$ is positive semidefinite since each entry of $\v{r}_j$ is positive. Note that the sign of each edge in $\bar{G}_j$ is identical to that of the corresponding edge in $G[V_j]$ which implies that the set of all nodes in $\bar{G}_j$ adjacent to a negative edge is $U_j$. Now for any $u \in U_j$ and $s \in S_j$, we have
\begin{align*}
    \lvert S_j \rvert \bar{W}_j(u, s) &= \lvert S_j \rvert r_j(u) r_j(s) W(u,s) \\
    &\geq \lvert S_j \rvert r_j(u) \dfrac{[\beta_j q_j(s)]^{1/3}}{2} W(u,s) \\
    &\hspace{2cm} (\text{using the lower bound on $\v{r}_j$})\\
    & \geq -3 \beta_j^{1/3} \left( \sum\limits_{\substack{u' \in U_j: \\ W(u,u')< 0}} r_j(u) q_j(u')^{1/3}  W(u, u') \right) \\
    &\hspace{2cm} (\text{using condition \ref{w} in Assumption \ref{nd-as}}) \\
    & \geq 2 \left( \sum\limits_{\substack{u' \in U_j: \\ W(u,u')< 0}} r_j(u) r_j(u')  W(u, u') \right) \\
    & \hspace{2cm} (\text{using the upper bound on $\v{r}_j$}) \\
    &= 2 \left( \sum\limits_{\substack{u' \in U_j: \\ \bar{W}_j(u,u')< 0}} \bar{W}_j(u, u') \right). \\
    &\hspace{2cm} (\text{using the definition of $\bar{W}$})
\end{align*}
Thus we have shown that graph $\bar{G}_j$ satisfies (\ref{psd2-as}) stated in Theorem \ref{psd2} using which we conclude that $\bar{L}_j$ is positive semidefinite.

Lastly, we show that the equation $W + Y^* + Z^* = \lambda^* \cdot Diag(X^*)$ is satisfied. For each $j \in [k]$, we have
\begin{align*}
    W(V_j, V_j) + Y^*(V_j, V_j) + Z^*(V_j, V_j) &= Diag(\v{r}_j \circ \v{r}_j) \\
    &\hspace{2cm} (\text{using the definitions of $Y^*, Z^*$}) \\
    &= \lambda^* \cdot Diag(X^*(V_j, V_j)). \\
    &\hspace{2cm} (\text{using the definition of $X^*$})
\end{align*}
For each distinct $j, j' \in [k]$, we have
\begin{align*}
    W(V_j, V_{j'}) + Y^*(V_j, V_{j'}) + Z^*(V_j, V_{j'}) &= 0
\end{align*}
using the definitions of $Y^*, Z^*$. Similarly, for each stray node $v$, we have
\begin{align*}
    W(v, :) + Y^*(v, :) + Z^*(v, :) &= 0 \\
    W(:, v) + Y^*(:, v) + Z^*(:, v) &= 0 
\end{align*}
using the definitions of $Y^*, Z^*$.
\end{proof}

\begin{proof}[Proof of Theorem \ref{diag-uniq}]
Observe that $X=0$ is a feasible solution for (\ref{p-nd}) which implies that the optimal value of (\ref{p-nd}) is non-negative. This implies that for any optimal solution, without loss of generality, we may assume that the constraint $\norm{diag(X)} \leq 1$ is tight. Since $X^*$ is optimal for (\ref{p-nd}), and since both (\ref{p-nd}) and its dual have strictly feasible solutions, using the Karush-Kuhn-Tucker (KKT) conditions for optimality, there exist a non-negative matrix $Y^* \in \mathbb{S}^n$, a positive semidefinite matrix $Z^*$, and a non-negative scalar $\lambda^*$ such that:
\begin{itemize}
    \item $X^*_{ij} Y^*_{ij} = 0, \forall i, j \in [n]$ 
    \item $\langle X^*, Z^*\rangle = 0$
    \item $\lambda^* \cdot (\norm{diag(X^*)}-1) = 0$
    \item $W+Y^*+Z^* = \lambda^* \cdot Diag(X^*)$ 
\end{itemize}

We also note that $\lambda^*$ is a positive scalar. Indeed if $\lambda^*$ is zero, then the last condition above implies $diag(Z^*)$ is zero and consequently $Z^*=0$ since $Z^*$ is positive semidefinite. This implies that $Y^*= -W$ which contradicts the non-negativity of $Y^*$ since $W$ contains a positive entry. 

Let $X^{**}$ be another optimal solution of (\ref{p-nd}). Then we have
\begin{align*}
    0 &= \langle W, X^*-X^{**}\rangle \\ 
    &= \langle \lambda^* \cdot Diag(X^*) - Y^*- Z^*, X^*-X^{**} \rangle && (\text{substituting for $W$}) \\
    &= \lambda^*  - \lambda^* \cdot \langle Diag(X^*), X^{**} \rangle - \langle Y^*+Z^*, X^*-X^{**}\rangle && (\because \norm{diag(X^*)} = 1) \\
    &= \lambda^*  - \lambda^* \cdot \langle Diag(X^*), X^{**} \rangle + \langle Y^*+Z^*, X^{**}\rangle && (\because \langle Y^*, X^*\rangle = \langle Z^*, X^*\rangle = 0) \\
    &\geq \lambda^*  - \lambda^* \cdot \langle Diag(X^*), X^{**} \rangle. && (\because \langle Y^*, X^{**}\rangle, \langle Z^*, X^{**}\rangle \geq 0)
\end{align*}
Using the fact that $\lambda^*$ is positive, the above implies that $\langle diag(X^*), diag(X^{**})\rangle \geq 1$. However, since both $diag(X^*)$ and $diag(X^{**})$ lie on the unit sphere, we conclude that $diag(X^*) = diag(X^{**})$.
\end{proof}

\begin{proof}[Proof of Theorem \ref{diag-robust}]
Note that since $W$ has at least one positive entry, $\max(W_+)$ is a positive scalar. If $W_{ii'} > 0$ for some $i, i' \in [n]$, then $(\v{e}_i \v{e}_{i'}^T + \v{e}_{i'} \v{e}_i^T)/\sqrt{2}$ is feasible for (\ref{p-nd}) and we have
\begin{equation}
    \langle W, X^* \rangle \geq \sqrt{2} W_{ii'} > 0.
\end{equation}
Using a similar argument, we also conclude that
\begin{equation}
    \langle W + \Delta, X' \rangle > 0.
\end{equation}
Moreover since the optimal values of the two programs are positive, we have that
\begin{equation*}
\norm{diag(X^*)} = \norm{diag(X')} = 1.     
\end{equation*}
Observe that
\begin{equation}\label{xs}
    \begin{aligned}
    \lvert \langle \Delta, X^* \rangle \rvert &\leq \norm{\Delta}_F \norm{X^*}_F && (\text{using Cauchy-Schwarz inequality}) \\
    &\leq \sqrt{n}\norm{\Delta}_F \norm{diag(X^*)} && (\because \norm{X^*}_F \leq \sqrt{n}\norm{diag(X^*)}) \\
    &= \sqrt{n}\norm{\Delta}_F. && (\because \norm{diag(X^*)} = 1)
    \end{aligned}
\end{equation}
Similarly, we also have that
\begin{equation} \label{xp}
    \lvert \langle \Delta, X' \rangle \rvert \leq \sqrt{n}\norm{\Delta}_F.
\end{equation}
Now define 
\begin{equation*}
    X'' := \dfrac{X^*+X'}{\norm{diag(X^*) + diag(X')}}.
\end{equation*}
Noting that $X''$ is feasible for (\ref{p-nd}), and therefore using the fact that $\langle W, X^*\rangle \geq \langle W, X'' \rangle $, we get
\begin{align*}
    \langle W, X^*\rangle \norm{diag(X^*) + diag(X')} & \geq \langle W,X^* \rangle + \langle W, X' \rangle \\
    &= \langle W,X^* \rangle + \langle W + \Delta, X' \rangle - \langle \Delta, X' \rangle \\
    &\geq \langle W,X^* \rangle + \langle W + \Delta, X^* \rangle - \langle \Delta, X' \rangle \\
    & \hspace{2cm} (\text{using the optimality of $X'$}) \\
    &= 2\langle W,X^* \rangle + \langle \Delta, X^* \rangle - \langle \Delta, X' \rangle \\
    &\geq 2\langle W,X^* \rangle - 2\sqrt{n}\norm{\Delta}_F \\
    & \hspace{2cm} (\text{using (\ref{xs}) and (\ref{xp})}) 
\end{align*}
which is equivalent to 
\begin{equation}\label{diag-norm-bd}
\norm{diag(X^*) + diag(X')} \geq 2 - \dfrac{2\sqrt{n}\norm{\Delta}_F}{\langle W, X^*\rangle}
\end{equation}
since $\langle W, X^*\rangle$ is positive. Now we have
\begin{align*}
    \norm{diag(X^*) - diag(X')} &= \sqrt{4 - \norm{diag(X^*) + diag(X')}^2} \\
    & \hspace{2cm} (\because \norm{diag(X^*)} = \norm{diag(X')} = 1) \\
    &\leq \dfrac{2\sqrt{2}n^{1/4}\norm{\Delta}_F^{1/2}}{\langle W, X^*\rangle^{1/2}}. \\
    &\hspace{2cm} (\text{using (\ref{diag-norm-bd})})
\end{align*}
This concludes the proof.  
\end{proof}

%----------------------------------------------------------------------
\section{Conclusions}
%----------------------------------------------------------------------

In this work, we propose a novel generative model, NFM, for graphs which, unlike the SBM, also generates feature vectors for each node in the graph. We analyze, theoretically and computationally, the performance of two different SDP formulations in recovering the true clusters in graph instances generated according to the NFM. In particular, we begin with an algorithm based on the SDP (\ref{p-1d}), but then demonstrate its lack of robustness to certain noisy instances generated by the NFM. To overcome this shortcoming, we propose a new algorithm based on a different SDP (\ref{p-nd}). We build theory towards showing that SDP (\ref{p-nd}) can be used to provably recover, for each true cluster, nodes with sufficiently strong membership signal in their feature vectors, in the presence of noisy nodes, without involving any tuning parameters.

\bibliography{robust_cc}

\end{document}